\theoremstyle{definition}
\newtheorem{proposition}{Proposition}
\newtheorem{definition}{Definition}
\newtheorem{example}{Example}
\title{Relative local dependence of bivariate copulas}
\author{Issey Sukeda}
\affil{Department of Mathematical Informatics, Graduate School of Information Science and Technology, The University of Tokyo; \\RIKEN Center for Brain Science}
\author{Tomonari Sei}
\affil{Graduate School of Information Science and Technology, The University of Tokyo}
\date{}
\begin{document}

\maketitle

\begin{abstract}
For a bivariate probability distribution, local dependence around a single point on the support is often formulated as the second derivative of the logarithm of the probability density function. However, this definition lacks the invariance under marginal distribution transformations, which is often required as a criterion for dependence measures. In this study, we examine the \textit{relative local dependence}, which we define as the ratio of the local dependence to the probability density function, for copulas. By using this notion, we point out that typical copulas can be characterised as the solutions to the corresponding partial differential equations, particularly highlighting that the relative local dependence of the Frank copula remains constant. The estimation and visualization of the relative local dependence are demonstrated using simulation data. Furthermore, we propose a class of copulas where local dependence is proportional to the $k$-th power of the probability density function, and as an example, we demonstrate a newly discovered relationship derived from the density functions of two representative copulas, the Frank copula and the Farlie-Gumbel-Morgenstern(FGM) copula.
\end{abstract}

\section{Introduction \label{sec:introduction}}

Dependence is a concept of importance in statistics, and has been extensively studied in tremendous papers. 
Marginal-invariance has been critical for dependence measures. For global dependence measures, while Pearson's correlation is not marginal-invariant, Spearman's $\rho$ and Kendall's $\tau$ have been widely used. It is known that these measures are expressed merely via copulas:
$\rho = \int_0^1 \int_0^1 12uvdC(u,v) - 3,$
and $\tau = 4\int_0^1 \int_0^1 C(u,v)dC(u,v) - 1,$
where $C$ is the cumulative distribution function (cdf) of the responding copula.

While many traditional dependence measures quantify global dependence, some studies have focused on the local dependence.
A typical way of defining local dependence of a distribution having the probability density function (pdf) $f(x,y)$ is 
$\frac{\partial^2}{\partial x \partial y}\log{f(x,y)},$
which was first proposed by Holland and Wang~\cite{holland1987}, further studied by Jones~\cite{JONES1998148,jones2003}, and used in various works~\cite{molenberghs1997nonlinear,sankaran2004,gupta2010local,koutoumanou2017local}.
Kurowicka and van Horssen~\cite{KUROWICKA2015127} studied it on copulas in the name of \textit{interaction function}. This notion is rather intuitive, as being the continuous version of log odds ratio, which is a notion of positive dependence on contingency tables~\cite{holland1987}. 
While this measure is convenient, we point out that it is not marginal-invariant and not preferable as a dependence measure.
To overcome this problem, we introduce a marginal-invariant local dependence measure defined as the ratio of the density function to its local dependence
$$\frac{1}{f(x,y)}\frac{\partial^2}{\partial x \partial y}\log{f(x,y)}$$
and name it \textit{relative local dependence}.

The following paper is structured as follows. 
In Section 2, we review the basic of copulas and the local dependence function. Section 3 is devoted to our new proposal, the use of \textit{relative local dependence}. The relative local dependence of typical copulas are presented, followed by the relationship between local Kendall's $\tau$. The estimation and visualization of relative local dependence are also demonstrated using simulation data in Section 4. 
Section 5 is dedicated to the extension of Section 3, which leads to the discovery of a new relationship between the well-known Frank copula and the FGM copula. 
Finally in Section 6, we discuss the limitations and the future works, and conclude our work.

\section{Preliminaries}
\subsection{Copulas}
A copula is roughly a probability distribution defined on $[0,1]^2$ with all marginals being uniform distributions~\cite{nelsen2007introduction}. Although copulas are extended to higher dimensions as well, we only consider bivariate absolute continuous copulas for simplicity in this paper. 
It is widely known that any joint density can be decomposed into marginal densities and a copula density from Sklar's theorem. In a bivariate case, 
\begin{align}
f(x_1,x_2) = f_{1}(x_1)f_2(x_2)c(F_1(x_1),F_2(x_2))  \label{eq:sklar}
\end{align}
where $F_1, F_2$ are the marginal distributions, $f_1, f_2$ are their densities, and $c$ is the copula density, respectively. 
Inversely, the copula density is determined from the joint density and its marginals:
$$c(u_1,u_2) = \frac{f(F_1^{-1}(u_1),F_2^{-1}(u_2))}{f_1(F_1^{-1}(u_1))f_2(F_2^{-1}(u_2))}.$$
When \eqref{eq:sklar} holds, it means that the distribution function $F$ (or $f$) becomes a copula function $C$ (or $c$) after the \textit{probability integral transform} $U = F(X)$ applied to $x_1$ and $x_2$ independently. We express this relationship as ``The distirbution $F$ has a copula $C$'' in this paper.

\subsection{Local dependence function}

The \textit{local dependence function}~\cite{holland1987} (or \textit{local dependence}~\cite{KUROWICKA2015127}) of $f$ is defined as follows:
\begin{definition}[local dependence] The local dependence function for $(x_1,x_2)$ with twice differentiable joint density $f$ is defined as 
$$i^f(x_1,x_2) = \frac{\partial^2}{\partial x_1 \partial x_2} \log{f(x_1,x_2)}.$$
\end{definition}
\noindent From \eqref{eq:sklar}, the following relationship between local dependence functions holds:
$$i^f(x_1,x_2) =\frac{\partial^2}{\partial x \partial y} \log{f(x_1,x_2)} = i^c(F_1(x_1),F_2(x_2))f_1(x_1)f_2(x_2).$$
$$i^c(u_1,u_2) = \frac{\partial^2}{\partial u_1 \partial u_2} \log{c(u_1,u_2)} = \frac{i^f(F_1^{-1}(u_1), F_2^{-1}(u_2))}{f_1(F_1^{-1}(u_1))f_2(F_2^{-1}(u_2))} $$
\noindent This notion of positive dependence has been studied in many literatures~\cite{GUPTA20101267, holland1987, JONES1998148, jones2003, KUROWICKA2015127}.
It also can be rewritten as 
$$i^c(u_1,u_2) = \frac{c^{11}(u_1,u_2)}{c(u_1,u_2)} - \frac{c^{10}(u_1,u_2)c^{01}(u_1,u_2)}{\{c(u_1,u_2)\}^2},$$
where $c^{11}(u_1,u_2) = \frac{\partial^2}{\partial u_1 \partial u_2} c(u_1,u_2), c^{10}(u_1,u_2) = \frac{\partial}{\partial u_1} c(u_1,u_2), c^{01}(u_1,u_2) = \frac{\partial}{\partial u_2} c(u_1,u_2)$.

\begin{example}[Gaussian distribution and Gaussian copula]
    The local dependence function for a bivariate Gaussian density with standard normal margins and correlation $\rho$ is $\frac{\rho}{1-\rho^2}$. On the other hand, the local dependence function of its copula $c^{Gaussian}_\rho(u_1,u_2)$ is $\frac{1}{\phi(\Phi^{-1}(u_1))\phi(\Phi^{-1}(u_2))} \frac{\rho}{1-\rho^2}$ where $\phi$ and $\Phi$ denotes pdf and cdf of the standard normal distribution, respectively.
\end{example}

While this definition of local dependence is widely known, it lacks the invariance property to the monotone transformation in marginal distributions. While this property is not necessarily required for the measure of the dependence, it is considered natural and  preferable. As Renyi~\cite{renyi1959measures} states, ``the following sets of postulates for an appropriate measure of dependence, which shall be denoted as $\delta(\xi,\eta)$, seems to be natural: \dots (F) If the Borel-measurable functions $f(x)$ and $g(x)$ maps the real axis in a one-to-one way on iteslf, $\delta(f(\xi),g(\eta)) = g(\xi,\eta)$. ''

\section{Relative local dependence}
Considering the issue previously pointed out, we propose to consider the modified version of local dependence function instead, which we name as the \textit{relative local dependence}.  
We define the relative local dependence as the local dependence function divided by the original density:
$$r^f(x_1,x_2) = \frac{1}{f(x_1,x_2)}\frac{\partial^2}{\partial x_1 \partial x_2} \log{f(x_1,x_2)}. \label{eq:ri}$$
\noindent The relative local dependence is preferable to the ordinary local dependence as a notion of dependence because it does not reflect the influence of marginal distributions. In fact, for any joint density $f$, its relative local dependence is determined only by the copula density of $f$.

\begin{proposition}\label{prop:invariance}
    Let $f(x,y)$ a joint density and $c(u,v)$ a copula density of $f$, where $u = F_1(x), v=F_2(y)$. Then, $r^f(x,y) = r^c(u,v)$.
\end{proposition}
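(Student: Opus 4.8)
The plan is to reduce the claim to two identities already recorded in the excerpt, so that the proof collapses to a single cancellation. First I would rewrite each relative local dependence as a quotient of its local dependence function and its density, that is $r^f(x,y) = i^f(x,y)/f(x,y)$ and $r^c(u,v) = i^c(u,v)/c(u,v)$. It then suffices to match the numerators and the denominators of these two quotients up to a common factor.

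For the denominator I would invoke Sklar's decomposition \eqref{eq:sklar}, which yields $f(x,y) = f_1(x)\, f_2(y)\, c(u,v)$ with $u = F_1(x)$ and $v = F_2(y)$. For the numerator I would use the chain-rule relationship between local dependence functions established just above the statement, namely $i^f(x,y) = i^c(u,v)\, f_1(x)\, f_2(y)$. Both carry the same marginal factor $f_1(x)\, f_2(y)$, so substituting them into the quotient gives
$$r^f(x,y) = \frac{i^f(x,y)}{f(x,y)} = \frac{i^c(u,v)\, f_1(x)\, f_2(y)}{c(u,v)\, f_1(x)\, f_2(y)} = \frac{i^c(u,v)}{c(u,v)} = r^c(u,v),$$
which is exactly the assertion.

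Because both ingredients are already available, I expect essentially no obstacle; the only point demanding care is the positivity of the marginal densities $f_1(x)$ and $f_2(y)$, which is what legitimizes cancelling them and holds wherever the probability integral transform and the copula density are well defined. If instead one wanted a self-contained argument, the one genuinely computational (though routine) step would be re-deriving $i^f = i^c\, f_1\, f_2$ by taking $\log f(x,y) = \log f_1(x) + \log f_2(y) + \log c(F_1(x),F_2(y))$, differentiating in $x$ and then in $y$, and applying the chain rule with $du/dx = f_1(x)$ and $dv/dy = f_2(y)$; the marginal terms $\log f_1$ and $\log f_2$ drop out under the mixed second derivative, leaving precisely the shared factor that cancels.
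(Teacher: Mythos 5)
Your proposal is correct and is essentially the paper's own argument in modular form: the paper carries out the same cancellation in a single chain of equalities, deriving $i^f = i^c\, f_1\, f_2$ inline (the marginal log terms vanish under $\partial^2/\partial x\,\partial y$ and the chain rule supplies the factor $f_1(x)f_2(y)$), whereas you invoke that identity as already recorded. No gap; the positivity remark about $f_1, f_2$ is a reasonable aside but not something the paper dwells on.
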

\begin{proof}
The joint density $f$ is written as 
$$f(x,y) = f_1(x)f_2(y)c(F_1(x),F_2(y)),$$
where $f_1$ and $f_2$ are marginal densities of $x$ and $y$, respectively. Hence, 
\begin{align*}
    r^f(x,y) &= \frac{1}{f(x,y)}\frac{\partial^2}{\partial x \partial y} \log{f(x,y)}\\
    &= \frac{1}{f_1(x)f_2(y)c(F_1(x),F_2(y))}\frac{\partial^2}{\partial x \partial y} \log{f_1(x)f_2(y)c(F_1(x),F_2(y))}\\
    &= \frac{1}{f_1(x)f_2(y)c(F_1(x),F_2(y))}\frac{\partial^2}{\partial x \partial y} \log{c(F_1(x),F_2(y))}\\
    &= \frac{1}{c(u,v)}\frac{\partial^2}{\partial u \partial v} \log{c(u,v)}\\
    &= r^c(u,v)
\end{align*}
\end{proof}

For some typical bivariate copulas, we found that the relative local dependence can be represented in a simple form. From Section 3.1 to Section 3.3, we view these relationships between the relative local dependence and the pdf/cdf of copulas, which will be summarized in Table~\ref{tab:local dependences}. Section 3.4 reveals the relationship between the relative local dependence and Kendall's $\tau$, followed by application to visualizing local dependence via dependence map~\cite{jones2003} in Section 3.5.

\subsection{Relative local dependence of Archimedean copulas}

Let us consider a strict bivariate Archimedean copula
$$C^{\mathrm{A}}(u, v) = \psi(\psi^{-1}(u) + \psi^{-1}(v)),$$
where $\psi(0) = 1$, $\psi(\infty) = 0$, $(-1)^k \psi^{(k)}(t) > 0$ for $k=1,2$. $\psi$ is called a ``\textit{generator}''.
The density function of Archimedean copula is represented as
$$c^\mathrm{A}(u,v) = \frac{\partial^2}{\partial u \partial v}C^\mathrm{A}(u,v)= \frac{\psi''(\psi^{-1}(u) + \psi^{-1}(v))}{ \psi'(\psi^{-1}(u))\psi'(\psi^{-1}(v))}.$$

\noindent Assume that the fourth derivative of the
generator, denoted as $\psi^{(4)}$, exists. The local dependence and the relative local dependence of Archimedean copula are
$$i^\mathrm{A}(u, v) = \frac{1}{\psi'(\psi^{-1}(u))\psi'(\psi^{-1}(v))}\frac{d^2}{dt^2}[\log{\psi''(t)}]_{t=\psi^{-1}(u)+\psi^{-1}(u)}$$
and
$$r^A(u, v) = \left. \frac{1}{\psi''(t)}\frac{d^2}{dt^2}\log{\psi''(t)}\ \right|_{t=\psi^{-1}(u)+\psi^{-1}(v)},$$
respectively. Here we calculate the relative local dependence for four typical Archimedean copulas: Frank, Clayton, Gumbel-Hougaard, and Ali-Mikhail-Haq. The different features depending on the type of copulas can be observed in Figure~\ref{fig:diagonal-rdf}, where the value of the relative local dependence on the diagonal ($u = v \in [0,1]$) of each copula is depicted. The examples presented below are the explicit form of the relative local dependence of Frank copula and Clayton copula. 

\begin{figure}
    \centering
    \includegraphics[width=\linewidth]{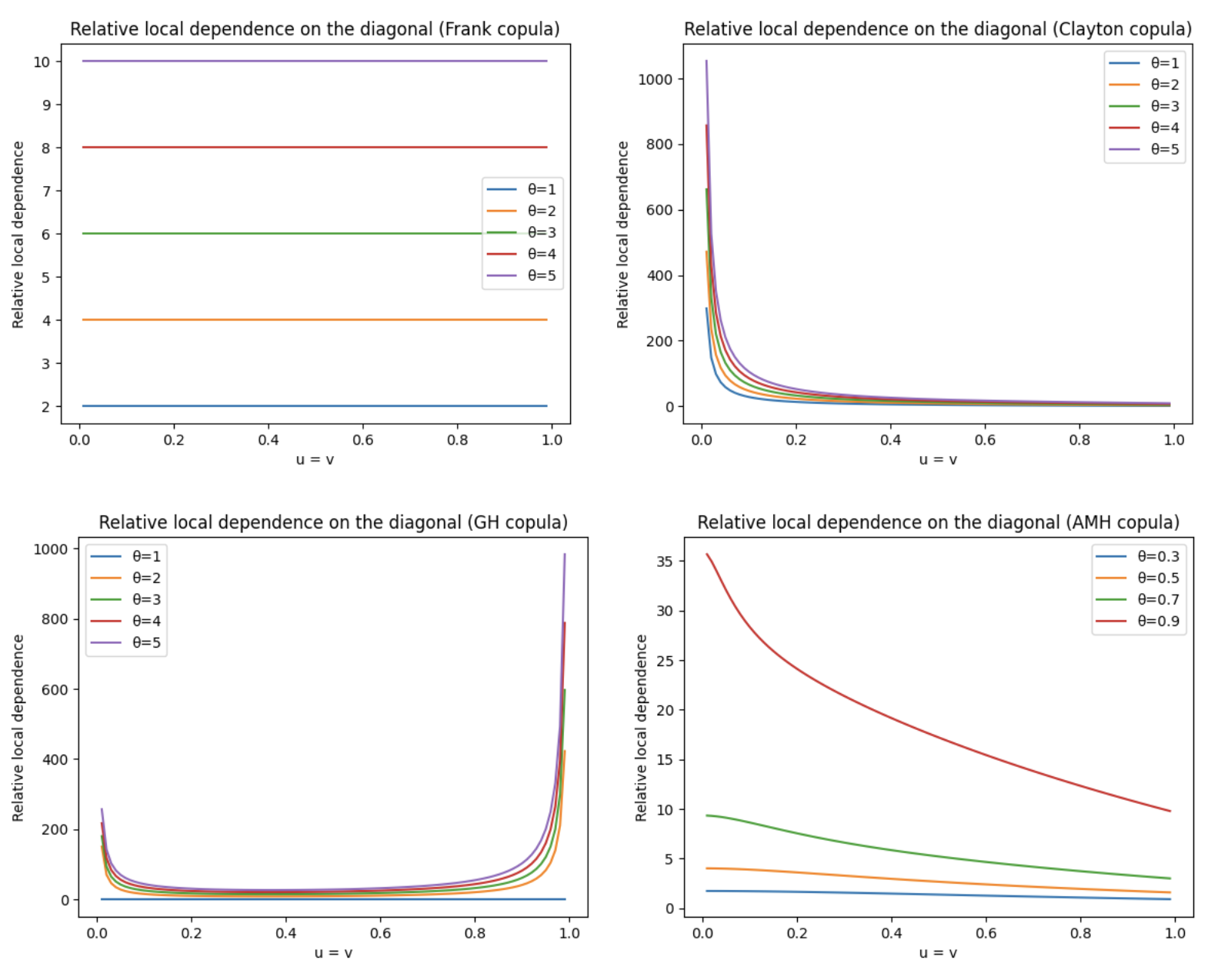}
  \caption{The relative local dependence along the diagonal of Archimedean copulas. $u=v=0$ indicates the left bottom tail and  $u=v=1$ indicates the right upper tail.}
  \label{fig:diagonal-rdf}
\end{figure}

\begin{example}[Frank copula]\label{example:frank}
The cdf of Frank copula is defined as
$$C_\theta^{Frank}(u,v) = \psi_\theta(\psi_\theta^{-1}(u) + \psi_\theta^{-1}(v)) = -\frac{1}{\theta} \ln \left(1+\frac{(e^{-\theta u}-1)(e^{-\theta v}-1)}{e^{-\theta}-1}\right), \ \theta \neq 0,$$
where
$$\psi_\theta(t) = -\frac{1}{\theta}\ln{\left(1-(1-e^{-\theta})e^{-t} \right)},\ \psi^{-1}_\theta(s) = -\ln{\left( \frac{e^{-\theta s} - 1}{e^{-\theta }-1}\right)}.$$
\end{example}
By calculating derivatives, Frank density is 
$$c_\theta^{Frank}(u, v) = \frac{\theta (1 - e^{-\theta})e^{-\theta(u + v)}}{\{1 - e^{-\theta} - (1 - e^{-\theta u})(1 - e^{-\theta v})\}^2}.$$
The direct calculation leads to
$$i_\theta^{{Frank}}(u, v) = 2\theta c_\theta^{Frank}(u, v),$$
and
$$r_\theta^{{Frank}}(u, v) = 2\theta.$$

\begin{example}[Clayton copula]
The cdf of Clayton copula is defined as 
$$C_\theta^{Clayton}(u,v)  = \psi_\theta(\psi_\theta^{-1}(u) + \psi_\theta^{-1}(v)) = \frac{1}{(u^{-\theta}+v^{-\theta}-1)^{\frac{1}{\theta}}}, \theta \in [1,\infty)$$
where
$$\psi_\theta(t) = (1+t)^{-\frac{1}{\theta}},\  \psi_\theta^{-1}(s) = s^{-\theta} - 1 .$$
Clayton density is calculated as
$$c_\theta^{Clayton}(u,v) = (1+\theta)\frac{1}{u^{1+\theta}v^{1+\theta}}\frac{1}{(u^{-\theta}+v^{-\theta}-1)^{\frac{1}{\theta}+2}}.$$


The direct calculation leads to
$$i_\theta^{\ \mathrm{Clayton}}(u,v) = \theta(1+2\theta)\frac{1}{u^{1+\theta}v^{1+\theta}}\frac{1}{(u^{-\theta}+v^{-\theta}-1)^2} = c_\theta^{Clayton}(u,v)\frac{\theta(1+2\theta)}{1+\theta}(u^{-\theta}+v^{-\theta}-1)^{\frac{1}{\theta}}= \frac{\theta(1+2\theta)}{1+\theta}\frac{c_\theta^{Clayton}(u,v)}{C_\theta^{Clayton}(u,v)},$$
and
$$r_\theta^{{Clayton}}(u, v) = \frac{\theta(1+2\theta)}{1+\theta}\frac{1}{C_\theta^{Clayton}(u,v)}.$$
\noindent It is observed that the relative local dependence function of Clayton copula is proportional to the inverse of its cdf. This is consistent with the fact that Clayton copula has heavier tail dependence near $(0,0)$ and lighter tail dependence near $(1,1)$.
\end{example}

\subsection{Relative local dependence of the minimum information copula}
In Example~\ref{example:frank}, it is observed that the relative local dependence of Frank copulas is a constant, depending on the parameter of its copula density. In contrast, the distribution with a constant local dependence function has been studied previously by Jones~\cite{JONES1998148}. The copula density with a constant local dependence is written as
$$p(u,v) = a(u)b(v)\exp{(\theta uv)}$$
where $\theta$ is an arbitrary constant, and $a(u)$ and $b(v)$ are normalizing functions such that
$$\int_0^1 a(u)b(v)\exp{(\theta uv)} du = 1$$
and 
$$\int_0^1 a(u)b(v)\exp{(\theta uv)} dv = 1.$$

Here we point out that this distribution is identical to the minimum information copula that has been studied in different literatures and contexts. This copula density was first proposed by Meeuwissen and Bedford~\cite{MEEU1997} and was generalized by Bedford and Wilson~\cite{bedford2014construction}. The minimum information copula under fixed Spearman's $\rho$ (MICS) is defined as the optimal solution of the following problem:
\begin{equation}
\mathrm{minimize}\ \int_0^1 \int_0^1 p(u,v)\log{p(u,v)} \mathrm{d}u\mathrm{d}v, \label{prob:mick}
\end{equation}
$$\mathrm{s.t.}\ \int_0^1 p(u,v) \mathrm{d}u = 1,\ \int_0^1 p(u,v) \mathrm{d}v = 1,$$
$$0 \leq p(u,v),$$
$$\rho[p] = \int_0^1 \int_0^1 12(u-1/2)(v-1/2)p(\tilde{u},\tilde{v}) \mathrm{d}u\mathrm{d}v = \mu.$$
By employing the Lagrangian method, The optimal solution of this problem takes the form $p(u,v) = a(u)b(v)\exp{(\theta uv)}$. 
The local dependence function is immediate for this copula:
$$i_{\theta}^{MICS}(u,v) = \theta.$$

\subsection{Relative local dependence of Rodriguez-Lallena and Ubeda-Flores family of copula (RU copula)}

RU copula is proposed in ~\cite{rodriguez2004new} originally, and also known as ``maximum Tsallis entropy copula''~\cite{pougaza2012new}. Its cumulative distribution function and density are given as
$$C_\theta^{RU}(u,v) = uv + \theta A(u)B(v)$$
and
$$c_\theta^{RU}(u,v) = 1+\theta a(u)b(v)$$
respectively, where $A(u)$ and $B(v)$ are arbitrary functions satisfying $A(0)=A(1)=B(0)=B(1)=0$, $a(u) = A'(u)$, and $b(v) = B'(v)$. For this copula family, its local dependence is easily calculated as 
$$i_\theta^{{RU}}(u,v) = \theta a'(u)b'(v)(c_\theta^{RU}(u,v))^{-2}.$$
\noindent As a representative example, RU family includes Farlie-Gumbel-Morgenstern copula as the case where $a(u) = 2u-1$ and $b(v)=2v-1$.
\begin{example}[Farlie-Gumbel-Morgenstern (FGM) copula]
The FGM copula is defined as
$$C_\theta^{FGM}(u,v)=uv+\theta uv(1-u)(1-v), \theta \in [-1,1].$$
The density function is given by
$$c_\theta^{FGM}(u,v) = 1+\theta(2u-1)(2v-1).$$
Hence, 
$$i_\theta^{{FGM}}(u,v) = 4\theta (c_\theta^{FGM}(u,v))^{-2},$$
$$r_\theta^{{FGM}}(u,v) = 4\theta (c_\theta^{FGM}(u,v))^{-3}.$$
The relative local dependence along the diagonal is depicted in Figure~\ref{fig:fgm-diagonal-rld}, peaking at the center $u = v = 0.5$ and approaching $\frac{4\theta}{(1+\theta)^3}$ towards the both corners. 

\begin{figure}
    \centering
    \includegraphics[width=0.5\linewidth]{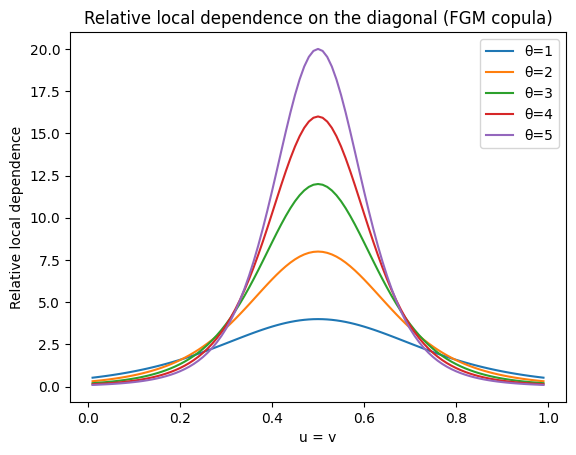}
    \caption{The relative local dependence along the diagonal of tje FGM copula. The function can be explicitly written as $r^{FGM}(u,u) = \frac{4\theta}{\{1+\theta(2u-1)\}^3}$.}
    \label{fig:fgm-diagonal-rld}
\end{figure}
\end{example}

\begin{table}[t!]
    \centering
    \begin{tabular}{|c|c|c|c|c|}\hline
        name of copula  & is Archimedean? &local dependence & relative local dependence \\ \hline \hline
        Uniform & yes & $= 0$&$= 0$\\
        MICS & no &$=$ constant&$\propto c(u,v)^{-1}$\\
        Frank & yes &$\propto c(u,v)$&$=$ constant\\
        Clayton & yes &$\propto c(u,v)/C(u,v)$&$\propto C(u,v)^{-1}$\\
        FGM& no &$\propto c(u,v)^{-2}$ & $\propto c(u,v)^{-3}$\\ \hline
    \end{tabular}
    \caption{Local dependence and relative local dependence of important classes of copulas}
    \label{tab:local dependences}
\end{table}

\subsection{Relationship between relative local dependence and local Kendall's $\tau$}

(Global) Kendall's $\tau$ is one of the most common notions of positive dependence ubiquitous in many researches and data analysis. Kendall's $\tau$ is independent of marginal distributions and thus defined solely by the copula function:
$$\tau = 4\int_0^1 \int_0^1 C(u,v)dC(u,v) - 1 = \int_{0}^{1} \int_{0}^{1}\int_{0}^{1}\int_{0}^{1} \mathrm{sgn}(u-\tilde{u})\mathrm{sgn}(v-\tilde{v})c(u,v)c(\tilde{u},\tilde{v})dudvd\tilde{u}d\tilde{v}.$$
On the other hand, extensions of the definition of Kendall's $\tau$ has been studied in several literature. Recently, four types of local Kendall's $\tau$ were proposed by Huang~\cite{huang2018copula}, which have been utilized by Albulescu et al.~\cite{ALBULESCU2020117762} to analyze dependence structure of commodity markets. 
One of them is defined on the bottom left corner as 
\begin{equation} \label{eq:tau_LL}
\tau_{LL}(X,Y; p,q) = \frac{4\int_0^p\int_0^q C(u,v)dC(u,v)}{C(p,q)^2}-1, 
\end{equation}
which coincides with Venter~\cite{venter2002tails}'s cumulative Kendall's $\tau$ when $p = q$. When $(p,q)=(1,1)$, it reduces to the normal $\tau$. Another representation of $\tau_{LL}$ is derived as follows. See \ref{appendix:local-kendall-tau} for its derivation.
\begin{proposition}\label{prop:tau_LL}
    $$\tau_{LL}(X,Y; p,q) = \frac{\int_{0}^{p} \int_{0}^{q}\int_{0}^{p}\int_{0}^{q} \mathrm{sgn}(u-\tilde{u})\mathrm{sgn}(v-\tilde{v})c(u,v)c(\tilde{u},\tilde{v})dudvd\tilde{u}d\tilde{v}}{C(p,q)^2}.$$
\end{proposition}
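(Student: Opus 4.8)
The plan is to work directly from the defining expression \eqref{eq:tau_LL} and to show that its numerator, $4\int_0^p\int_0^q C(u,v)\,dC(u,v) - C(p,q)^2$, coincides with the quadruple integral in the statement. Writing $N$ for that quadruple integral, I would first exploit the symmetry of the integrand under the interchange $(u,v)\leftrightarrow(\tilde u,\tilde v)$: since $\mathrm{sgn}(\tilde u-u)\,\mathrm{sgn}(\tilde v-v)=\mathrm{sgn}(u-\tilde u)\,\mathrm{sgn}(v-\tilde v)$, the two regions $\{\tilde u<u\}$ and $\{\tilde u>u\}$ contribute equally, so that $N = 2\int_{\{\tilde u<u\}}\mathrm{sgn}(v-\tilde v)\,c(u,v)\,c(\tilde u,\tilde v)$, every integral being taken over the box $[0,p]\times[0,q]$ in each pair of variables. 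This is the local analogue of the standard concordance/discordance computation that produces the global $\tau$.

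Next I would resolve the remaining sign by writing $\mathrm{sgn}(v-\tilde v)=\mathbf{1}\{\tilde v<v\}-\mathbf{1}\{\tilde v>v\}$ and carrying out the inner integral over $(\tilde u,\tilde v)$ for each fixed $(u,v)$. Because $u\le p$ and $v\le q$, the box constraints on $(\tilde u,\tilde v)$ are automatically met once $\tilde u<u$, so the inner integrals collapse to copula values: $\int_0^u\int_0^v c = C(u,v)$ for the concordant part and $\int_0^u\int_v^q c = C(u,q)-C(u,v)$ for the discordant part. Substituting back yields $N = 4\int_0^p\int_0^q C(u,v)\,dC(u,v) - 2\int_0^p\int_0^q C(u,q)\,dC(u,v)$.

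The final, slightly delicate, step is to identify the second term as $C(p,q)^2$. Here I would use $\int_0^q c(u,v)\,dv = \partial_u C(u,q)$ to get $2\int_0^p\int_0^q C(u,q)\,c(u,v)\,dv\,du = 2\int_0^p C(u,q)\,\partial_u C(u,q)\,du = \int_0^p \partial_u[C(u,q)^2]\,du = C(p,q)^2$, using $C(0,q)=0$. Dividing the resulting identity $N = 4\int_0^p\int_0^q C\,dC - C(p,q)^2$ by $C(p,q)^2$ then reproduces \eqref{eq:tau_LL}, completing the proof.

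I expect the main obstacle to be bookkeeping the domain restrictions correctly. Unlike the global case, the inner integrals in $(\tilde u,\tilde v)$ must remain inside $[0,p]\times[0,q]$, and the claim that the concordant inner integral equals exactly $C(u,v)$ (rather than a truncated quantity) hinges on the orderings $u\le p$ and $v\le q$. Verifying that these truncations behave as stated, and that the reduction of $2\int C(u,q)\,dC$ to $C(p,q)^2$ survives the presence of the box, is where care is needed; the rest is routine.
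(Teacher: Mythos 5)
Your argument is correct and rests on the same two ingredients as the paper's proof in the appendix: the symmetry of the integrand under $(u,v)\leftrightarrow(\tilde u,\tilde v)$ and the identification of the unsigned mass with $C(p,q)^2$. The only cosmetic difference is that the paper splits $C(p,q)^2$ directly into the four concordance/discordance regions $A_1+A_2+A_3+A_4$ via Fubini, whereas you recover the $C(p,q)^2$ term through the antiderivative identity $2\int_0^p C(u,q)\,\partial_u C(u,q)\,du = C(p,q)^2$; both steps are valid and your handling of the box truncations is sound.
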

\noindent In contrast to the usual lower tail dependence $\lambda_{L} = \lim_{p\to +0}\frac{C(p,p)}{p}$, the novel lower tail dependence was defined in the same literature as the limit of local Kendall's $\tau$ on the diagonal\footnote{The existence of the limiting value is not discussed by Huang~\cite{huang2018copula}.}:
$$\lambda_{LL}^{Kendall} = \lim_{p\to 0} \tau_{LL}(X,Y; p,p) \in [-1,1].$$

\begin{example}[Clayton copula $c_\theta^{Clayton}$]\footnote{This calculation is not consistent with Appendix A. of Venter~\cite{venter2002tails}. We further confirmed that $\tau_{LL}$ becomes constant for Clayton copulas using numerical integration to calculate \eqref{eq:tau_LL}.} $\tau_{LL}(X,Y; p,p) = \frac{\theta}{\theta+2},\  \lambda_{LL}^{Kendall} = \lim_{p\to 0} \tau_{LL}(X,Y; p,p) = \frac{\theta}{\theta+2}.$
    
\end{example}
\begin{example}[FGM copula $c_\theta^{FGM}$~\cite{huang2018copula}]
    $\tau_{LL}(X,Y; p,p) = \frac{2\theta p^2}{9(1+\theta(p-1))^2},\  \lambda_{LL}^{Kendall} = \lim_{p\to 0} \tau_{LL}(X,Y; p,p) = 0.$
\end{example}

Here we consider an extension of the local Kendall's $\tau$. While $\tau_{LL}$ is defined on a rectangular region $[0,p]\times[0,q]$, the naive extension to an arbitrary rectangular region can be made.
\begin{definition}[Local Kendall's $\tau$ for a general rectangular region (naive)] For a rectangular region $[p_1,p_2]\times[q_1,q_2]\ (0\leq p_1 < p_2 \leq 1, 0\leq q_1 < q_2 \leq 1)$, we define the local Kendall's $\tau$ as 
\begin{align*}
\tau_{[p_1,p_2]\times[q_1,q_2]}&= \frac{\int_{p_1}^{p_2} \int_{q_1}^{q_2}\int_{p_1}^{p_2}\int_{q_1}^{q_2} \mathrm{sgn}(u-\tilde{u})\mathrm{sgn}(v-\tilde{v})c(u,v)c(\tilde{u},\tilde{v})dudvd\tilde{u}d\tilde{v}}{\{C(p_1,q_1)+C(p_2,q_2)-C(p_1,q_2)-C(p_2,q_1)\}^2}.
\end{align*}
\end{definition}
\noindent This definition is consistent with the local Kendall's $\tau$ \eqref{eq:tau_LL} defined by Huang~\cite{huang2018copula}, i.e., $\tau_{[0,p]\times[0,q]} = \tau_{LL}(X,Y;p,q)$.

However, this naive definition extending $\tau_{LL}$ is problematic in measuring local dependence in extreme cases such as on a single inner point. 

\begin{proposition} Suppose $c$ is a twice differentiable copula density and $(p_1,q_1)$ is an inner point of $[0,1]^2$. Then, if the limit exists,
    $\tau_{[p_1,p_2]\times[q_1,q_2]}$ converges to 0 when $p_2 \to p_1$ and $q_2 \to q_1$.
\end{proposition}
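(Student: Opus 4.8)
The plan is to rescale the shrinking rectangle to a fixed unit cube, making the dependence on the side lengths $\epsilon := p_2-p_1$ and $\delta := q_2-q_1$ explicit, and then to exploit the antisymmetry of the sign factors. Substituting $u = p_1+\epsilon s$, $\tilde u = p_1+\epsilon\tilde s$, $v=q_1+\delta w$, $\tilde v=q_1+\delta\tilde w$ with $s,\tilde s,w,\tilde w\in[0,1]$, one has $\mathrm{sgn}(u-\tilde u)=\mathrm{sgn}(s-\tilde s)$ and $\mathrm{sgn}(v-\tilde v)=\mathrm{sgn}(w-\tilde w)$, while the four differentials together contribute a Jacobian $\epsilon^2\delta^2$. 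Hence the numerator equals $\epsilon^2\delta^2 N(\epsilon,\delta)$ with
$$N(\epsilon,\delta)=\int_{[0,1]^4}\mathrm{sgn}(s-\tilde s)\,\mathrm{sgn}(w-\tilde w)\,c(p_1+\epsilon s,q_1+\delta w)\,c(p_1+\epsilon\tilde s,q_1+\delta\tilde w)\,ds\,dw\,d\tilde s\,d\tilde w,$$
and, since the bracket in the denominator is exactly the $C$-measure $\int_{p_1}^{p_2}\int_{q_1}^{q_2}c\,du\,dv$ of the rectangle, the denominator equals $\epsilon^2\delta^2 M(\epsilon,\delta)^2$ with $M(\epsilon,\delta)=\int_{[0,1]^2}c(p_1+\epsilon s,q_1+\delta w)\,ds\,dw$. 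The common factor $\epsilon^2\delta^2$ cancels, leaving the clean identity $\tau_{[p_1,p_2]\times[q_1,q_2]}=N(\epsilon,\delta)/M(\epsilon,\delta)^2$.

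It then remains to pass to the limit in $N$ and $M$. Because $(p_1,q_1)$ is an interior point and $c$ is continuous (being twice differentiable), $c$ is bounded on a compact neighbourhood of $(p_1,q_1)$, so the integrands are uniformly dominated for small $\epsilon,\delta$ and converge pointwise to their values at $(p_1,q_1)$. I would invoke dominated convergence to get $M(\epsilon,\delta)\to c(p_1,q_1)$ and
$$N(\epsilon,\delta)\longrightarrow c(p_1,q_1)^2\left(\int_0^1\!\!\int_0^1\mathrm{sgn}(s-\tilde s)\,ds\,d\tilde s\right)\left(\int_0^1\!\!\int_0^1\mathrm{sgn}(w-\tilde w)\,dw\,d\tilde w\right).$$
The crux is that each bracketed factor vanishes by antisymmetry of $\mathrm{sgn}$ under swapping its two arguments, so $N\to 0$ while $M^2\to c(p_1,q_1)^2$. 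Provided $c(p_1,q_1)>0$, the ratio therefore tends to $0/c(p_1,q_1)^2=0$, which proves the claim (and in fact shows the limit exists, independently of the path along which $p_2\to p_1$, $q_2\to q_1$).

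The one genuinely delicate point, which I expect to be the main obstacle, is the degenerate case $c(p_1,q_1)=0$: there the denominator vanishes too and the argument above only gives the indeterminate form $0/0$, which is precisely what the hypothesis ``if the limit exists'' is meant to guard against. To treat it, and more importantly to exhibit the rate at which the limit is approached in the regular case, I would Taylor-expand $c$ to second order about $(p_1,q_1)$ inside $N$. Every contribution whose $s$-part or $w$-part is even under the corresponding swap leaves a factor $\int_0^1\int_0^1\mathrm{sgn}(s-\tilde s)\,ds\,d\tilde s=0$ (or its $w$-analogue) and dies; the first term surviving in both variables is the mixed one of order $\epsilon\delta$, and a short computation using $\int_0^1\int_0^1\mathrm{sgn}(s-\tilde s)\,s\,ds\,d\tilde s=\tfrac16$ yields
$$N(\epsilon,\delta)=\frac{\epsilon\delta}{18}\bigl(c^{11}c-c^{10}c^{01}\bigr)\big|_{(p_1,q_1)}+o(\epsilon\delta)=\frac{\epsilon\delta}{18}\,c(p_1,q_1)^2\,i^c(p_1,q_1)+o(\epsilon\delta).$$
Dividing by $M^2\to c(p_1,q_1)^2$ gives $\tau_{[p_1,p_2]\times[q_1,q_2]}\approx\tfrac{1}{18}\epsilon\delta\,i^c(p_1,q_1)$ when $c(p_1,q_1)>0$, which reconfirms the vanishing and makes transparent why the naive local Kendall's $\tau$ degenerates on a single point: its leading behaviour is just the local dependence scaled by the vanishing area of the rectangle. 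The remaining work is to control the error terms uniformly — and to decide what, if anything, survives — when $c(p_1,q_1)=0$, where the ratio of the two competing $o$-orders can fail to have a path-independent limit.
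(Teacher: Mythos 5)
Your proposal is correct and, in its second half, reproduces exactly the paper's argument: Taylor-expand $c$ about $(p_1,q_1)$, use the antisymmetry of the sign factors to annihilate the low-order terms, and read off the leading behaviour $\tau_{[p_1,p_2]\times[q_1,q_2]}\approx\tfrac{1}{18}\,i^c(p_1,q_1)\,dp_1\,dq_1$, which is precisely the paper's displayed computation. Your preliminary rescaling/dominated-convergence argument and your explicit flagging of the degenerate case $c(p_1,q_1)=0$ are refinements the paper omits (it hides them behind ``if the limit exists''), but the core route is the same.
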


\begin{proof}
    Let $p_2 = p_1+dp_1$ and $q_2 = q_1+dq_1$. 
    Use Taylor expansion of $c(u,v)$ around $(p_1,q_1)$. We obtain
    $$\tau_{[p_1,p_1+dp_1]\times[q_1,q_1+dq_1]} = \frac{2(\frac{\partial^2 c(p_1,q_1)}{\partial u \partial v}c(p_1,q_1)-\frac{\partial c(p_1,q_1)}{\partial u}\frac{\partial c(p_1,q_1)}{\partial v})(\frac{1}{6}dp_1^3)(\frac{1}{6}dq_1^3)}{(c(p_1,q_1)dp_1dq_1)^2} = \frac{1}{18}i^c(p_1,q_1)dp_1dq_1$$
\end{proof}

\noindent Considering this result, we propose to make a slight modification to the definition of the local Kendall's $\tau$. Here, the local Kendall's $\tau$ is associated with the relative local dependence.
\begin{definition}[Local Kendall's $\tau$ for a general rectangular region (modified)]
$$\tau^*_{[p_1,p_2]\times[q_1,q_2]} = \frac{\int_{p_1}^{p_2} \int_{q_1}^{q_2}\int_{p_1}^{p_2}\int_{q_1}^{q_2} \mathrm{sgn}(u-\tilde{u})\mathrm{sgn}(v-\tilde{v})c(u,v)c(\tilde{u},\tilde{v})dudvd\tilde{u}d\tilde{v}}{\{C(p_1,q_1)+C(p_2,q_2)-C(p_1,q_2)-C(p_2,q_1)\}^3}.$$
\end{definition}

\begin{proposition}
Suppose $c$ is a twice differentiable copula density and $(p_1,q_1)$ is an inner point of $[0,1]^2$.
For the modified local Kendall's $\tau$ for a general rectangular region, 
  $$\lim_{p_2 \searrow p_1\\ q_2 \searrow q_1} \tau^*_{[p_1,p_2]\times[q_1,q_2]} = \frac{1}{18}r^c(p_1,q_1)$$  
when the limit exists.
\end{proposition}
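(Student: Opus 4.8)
The plan is to exploit the fact that the numerator of $\tau^*_{[p_1,p_2]\times[q_1,q_2]}$ is \emph{identical} to that of the naive $\tau_{[p_1,p_2]\times[q_1,q_2]}$ treated in the preceding proposition; only the exponent in the denominator differs. Writing $p_2 = p_1+dp_1$ and $q_2 = q_1+dq_1$ and substituting $u=p_1+s,\ \tilde u=p_1+\tilde s,\ v=q_1+t,\ \tilde v=q_1+\tilde t$, the sign factors become $\mathrm{sgn}(s-\tilde s)\,\mathrm{sgn}(t-\tilde t)$, so the fourfold integral annihilates every part of the product $c(u,v)c(\tilde u,\tilde v)$ that is symmetric under $s\leftrightarrow\tilde s$ or under $t\leftrightarrow\tilde t$. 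Taylor expanding $c$ around the interior point $(p_1,q_1)$ and retaining the lowest surviving monomials reproduces the computation already carried out in the previous proof, giving the numerator
\[
N = \frac{1}{18}\bigl(c^{11}c - c^{10}c^{01}\bigr)\big|_{(p_1,q_1)}\,dp_1^3\,dq_1^3 + o(dp_1^3\,dq_1^3).
\]

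Next I would handle the denominator. The rectangular increment is exactly the copula mass of the rectangle,
\[
C(p_1,q_1)+C(p_2,q_2)-C(p_1,q_2)-C(p_2,q_1) = \int_{p_1}^{p_2}\!\int_{q_1}^{q_2} c(u,v)\,du\,dv = c(p_1,q_1)\,dp_1\,dq_1 + o(dp_1\,dq_1),
\]
by continuity of $c$ at the interior point. Cubing (rather than squaring, as in the naive case) yields a denominator whose leading term is $c(p_1,q_1)^3\,dp_1^3\,dq_1^3$. Relative to the squared normalization this introduces an extra factor of the mass $\approx c(p_1,q_1)\,dp_1\,dq_1$: its $dp_1\,dq_1$ portion exactly cancels the linear decay that forced the naive quantity to vanish, while the surviving factor of $c(p_1,q_1)$ converts the local dependence $i^c$ into $i^c/c = r^c$.

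Finally I would divide the leading terms and let $dp_1,dq_1\searrow 0$, obtaining
\[
\lim \tau^*_{[p_1,p_2]\times[q_1,q_2]} = \frac{\frac{1}{18}\bigl(c^{11}c - c^{10}c^{01}\bigr)}{c^3}\Bigg|_{(p_1,q_1)} = \frac{1}{18}\,\frac{1}{c(p_1,q_1)}\Bigl(\frac{c^{11}}{c}-\frac{c^{10}c^{01}}{c^2}\Bigr) = \frac{1}{18}\,r^c(p_1,q_1),
\]
where the last equality uses the identity $r^c = i^c/c$ together with the expansion $i^c = c^{11}/c - c^{10}c^{01}/c^2$ recorded in Section~2.2. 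The one point genuinely requiring care is that $(p_1,q_1)$ is an \emph{interior} point: this is what justifies the two-sided Taylor expansion of $c$ and, crucially, guarantees $c(p_1,q_1)>0$ so that the leading term of the cubed denominator does not vanish and the limit isolates $r^c$. The rest is bookkeeping, namely checking that the discarded higher-order contributions to $N$ and to the denominator are truly $o$ of the retained terms, which follows from twice-differentiability exactly as in the preceding proposition.
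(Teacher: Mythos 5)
Your argument is correct and is essentially the paper's own proof: the paper likewise reuses the numerator expansion from the preceding proposition and observes that the cubed denominator contributes one extra factor of $c(p_1,q_1)\,dp_1\,dq_1$, which cancels the residual $dp_1\,dq_1$ decay and turns $i^c$ into $i^c/c = r^c$. Your version simply spells out the bookkeeping that the paper leaves implicit.
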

\begin{proof}
    Similar to the previous one, where the whole term is finally divided by $c(p_1,q_1)dp_1dq_1$. Note that $r^c(u,v) = \frac{i^c(u,v)}{c(u,v)}$ from the definition of relative local dependence.
\end{proof}

\section{Estimation and visualization of the relative local dependence}

While the local dependence changes with monotone transformation in marginals, the relative local dependence remains constant. Therefore, we believe estimating the relative local dependence is more essential for understanding the local dependence of the joint distribution behind bivariate data. In this section, we propose a direct estimator of the relative local dependence based on the Frank copula, which we name \textit{local Frank fitting}, and compare it to the naive estimation using Jones and Koch~\cite{jones2003}.

\subsection{Local bilinear fitting}

Firstly, we review the estimation proposed in Jones and Koch~\cite{jones2003} for local dependence. The kernel estimator they proposed takes the form 
\begin{equation}\label{eq:kernel-estimator}
\hat{i}(x,y) = \frac{g_{11}(x,y) - \frac{g_{01}(x,y)g_{10}(x,y)}{g_{00}(x,y)}}{h_1^2h_2^2s_2^2g_{00}(x,y)}, 
\end{equation}
where $g_{11}(x,y) = n^{-1}\sum_{i=1}^n X_iK_{h_1}(X_i-x)Y_iK_{h_2}(Y_i-y)$, $g_{10}(x,y) = n^{-1}\sum_{i=1}^n X_iK_{h_1}(X_i-x)K_{h_2}(Y_i-y)$, $g_{01}(x,y) = n^{-1}\sum_{i=1}^n K_{h_1}(X_i-x)Y_iK_{h_2}(Y_i-y)$, $g_{00}(x,y) = n^{-1}\sum_{i=1}^n K_{h_1}(X_i-x)K_{h_2}(Y_i-y)$, $s_2 = \int u^2 K(u)du$, and $K_h(\cdot) = h^{-1}K(h^{-1}\cdot)$.
The kernel function is typically the biweight univariate density:
$$K(u) = \frac{15}{16}(1-u^2)^2.$$
This estimator was derived by locally fitting a bilinear form $a+bx+cy+dxy$ to the log density. Here, $h_1$ and $h_2$ are bandwidth parameters controlling smoothness of the result in the $x$ and $y$ directions, respectively.

While the estimator $\hat{i}$ estimates the local dependence of the joint density, we notice that the term $g_{00}(x,y)$ in the denominator is the kernel density estimator for a bivariate joint density itself. Therefore, the naive estimator of the relative local dependence is obtained as 
$$\hat{r}(x,y) = \frac{g_{11}(x,y) - \frac{g_{01}(x,y)g_{10}(x,y)}{g_{00}(x,y)}}{h_1^2h_2^2s_2^2\{g_{00}(x,y)\}^2}.$$

The estimated results are shown in Figure~\ref{fig:frank3}. In this experiment, we sampled 10000 samples from the Frank copula with its parameter $\theta = 3$. Theoretically, the ground truth of its local dependence function is $6c_{\theta=3}^{Frank}$ and its relative local dependence is 6 everywhere on the support, which are visualized in the left-hand side of Figure~\ref{fig:frank3}. The estimation results of the density function, the local dependence function, and the relative local dependence are shown in the top, middle, bottom of the right-hand side of Figure~\ref{fig:frank3}, respectively.
Overall, it can be observed that the estimation performance is relatively good near the center, while it exacerbates towards the four corner of $[0,1]^2$. Specifically in estimating relative local dependence, the histogram shows that the estimated values are lower than the true value $6$ in the entire region. 
We shall discuss three causes of this issue. Firstly, this naive estimator can be biased. Secondly, this method suffers from the border bias problem as well as the other kernel-based estimation methods. Especially, while the Frank copula has the higher density near $(0,0)$ and $(1,1)$, the kernel estimator ($g_{00}(x,y)$ in \eqref{eq:kernel-estimator}) is not able to capture this feature. This insufficient capability of the density estimator is inherited to the relative local dependence estimator since it includes the division with $g_{00}(x,y)$. Last but not least, the sample points tend to lack near $(0,1)$ and $(1,0)$ especially under the case of strong dependence, which is likely to lead to inaccurate estimates. Indeed, Jones and Koch~\cite{jones2003} suggests to ignore regions with low density when visualising the dependence map.


\begin{figure}[htbp]
    \centering
    \includegraphics[width=\linewidth]{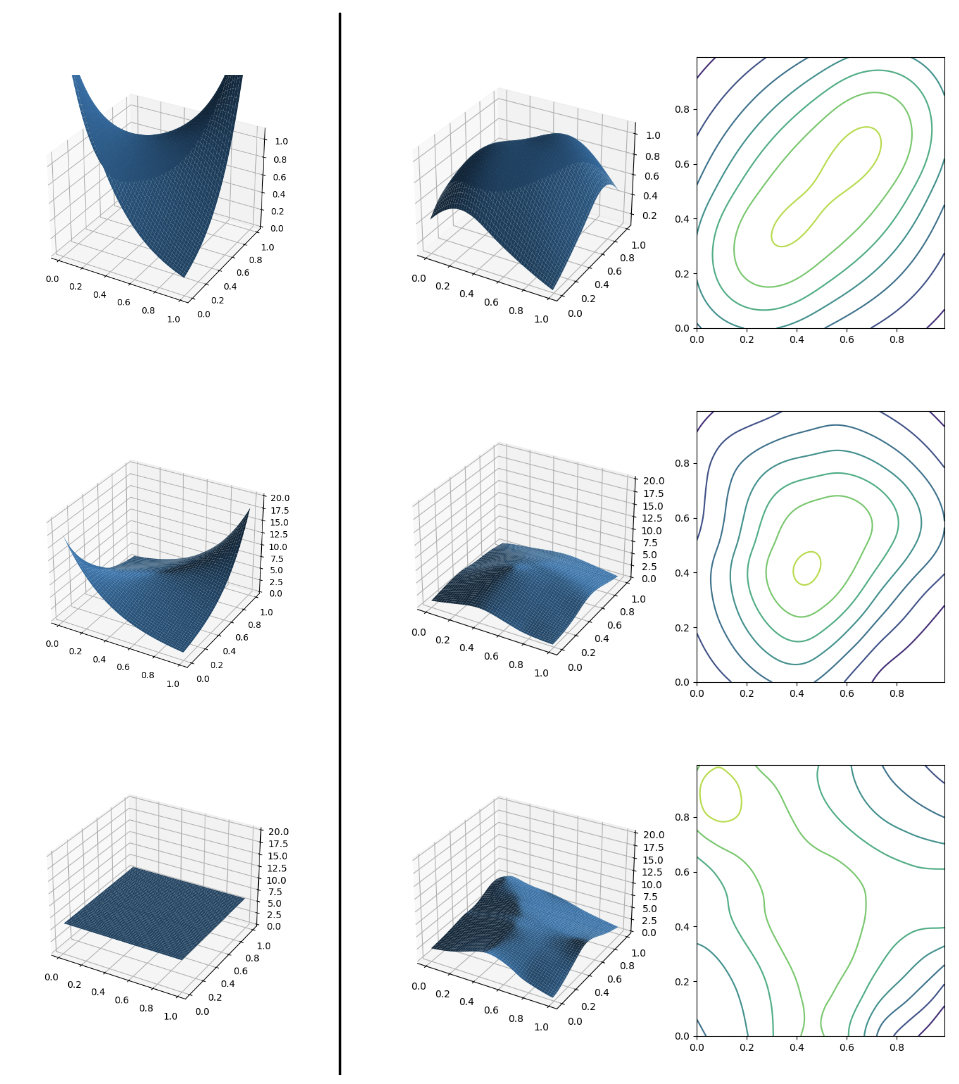}
  \caption{The kernel-based estimation results with simulated data sampled from $c_{\theta=3}^{Frank}$. (left: ground truth, right: estimated results / top: the density function, middle: the local dependence function, bottom: the relative local dependence)}
  \label{fig:frank3}
\end{figure}

On the other hand, (relative) local dependence of a single point $(x,y)$ entails difficulty in its interpretability. Here we further perform the visualization method called the \textit{dependence map}, following Jones and Koch~\cite{jones2003}. The dependence map is defined as the macro perspective of local dependence, which is intended to offer a middle ground between a scalar global metric that frequently smooths out too much of the dependence structure, and the comprehensive local dependence function, which is often overly detailed to be immediately practical. Specifically, each region is classified into three categories -- positive dependence, negative dependence, and neutral -- according to the hypothesis test that examines the null hypothesis $\hat{i} = 0$. Since $\hat{i} = 0$ and $\hat{r} = 0$ is truly equivalent to each other, the dependence map via local dependence and via our relative local dependence should look similar but slightly different due to the hypothesis testing procedures. 

Figure~\ref{fig:dependence-map-simulated} and Figure~\ref{fig:dependence-map-simulated2} show the dependence map of relative local dependence and local dependence function to simulated data respectively. 
This is done for 250 observations from the distribution of $(X, Y)$ where $X \sim \mathcal{N}(0,1)$ and $Y = X^2 + \epsilon, \epsilon \sim \mathcal{N}(0,1)$ and $\epsilon \Perp X$ in Figure~\ref{fig:dependence-map-simulated}, and for 250 observations from Frank copula $c_{\theta=5}^{{Frank}}$ in Figure~\ref{fig:dependence-map-simulated2}. 
Left figure shows the 250 observations. 
In the middle and right figures, red areas indicate estimated positive local dependence, light blue areas negative local dependence, respectively. Blue areas indicate low-density areas where estimations are not reliable. In Figure~\ref{fig:dependence-map-simulated}, both dependence maps capture the negative dependence for negative $X$ and the positive dependence for positive $X$. In Figure~\ref{fig:dependence-map-simulated2}, both dependence maps capture positive dependence in broad region. Overall, relative local dependence and local dependence do not exhibit differences in the dependence map.
\begin{figure}[htbp]
  \begin{minipage}[b]{0.25\linewidth}
    \centering
    \includegraphics[width=0.9\linewidth]{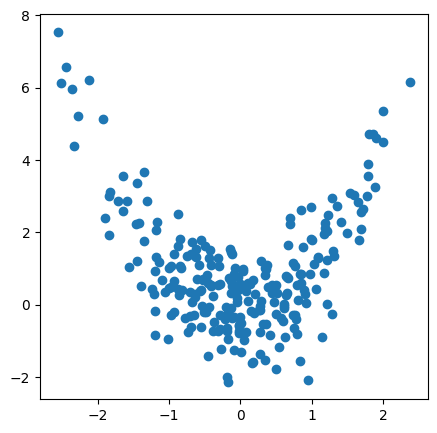}
  \end{minipage}
  \begin{minipage}[b]{0.25\linewidth}
    \centering
    \includegraphics[width=0.9\linewidth]{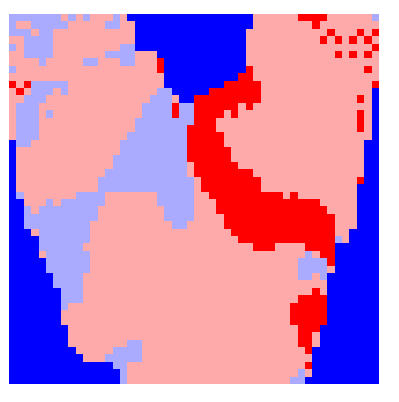}
  \end{minipage}
  \begin{minipage}[b]{0.25\linewidth}
    \centering
    \includegraphics[width=0.9\linewidth]{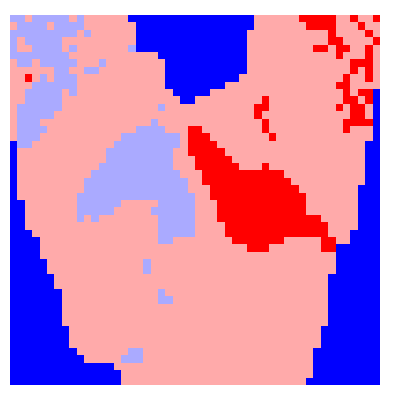}
  \end{minipage}
  \begin{minipage}[b]{0.2\linewidth}
    \centering
    \includegraphics[width=0.9\linewidth]{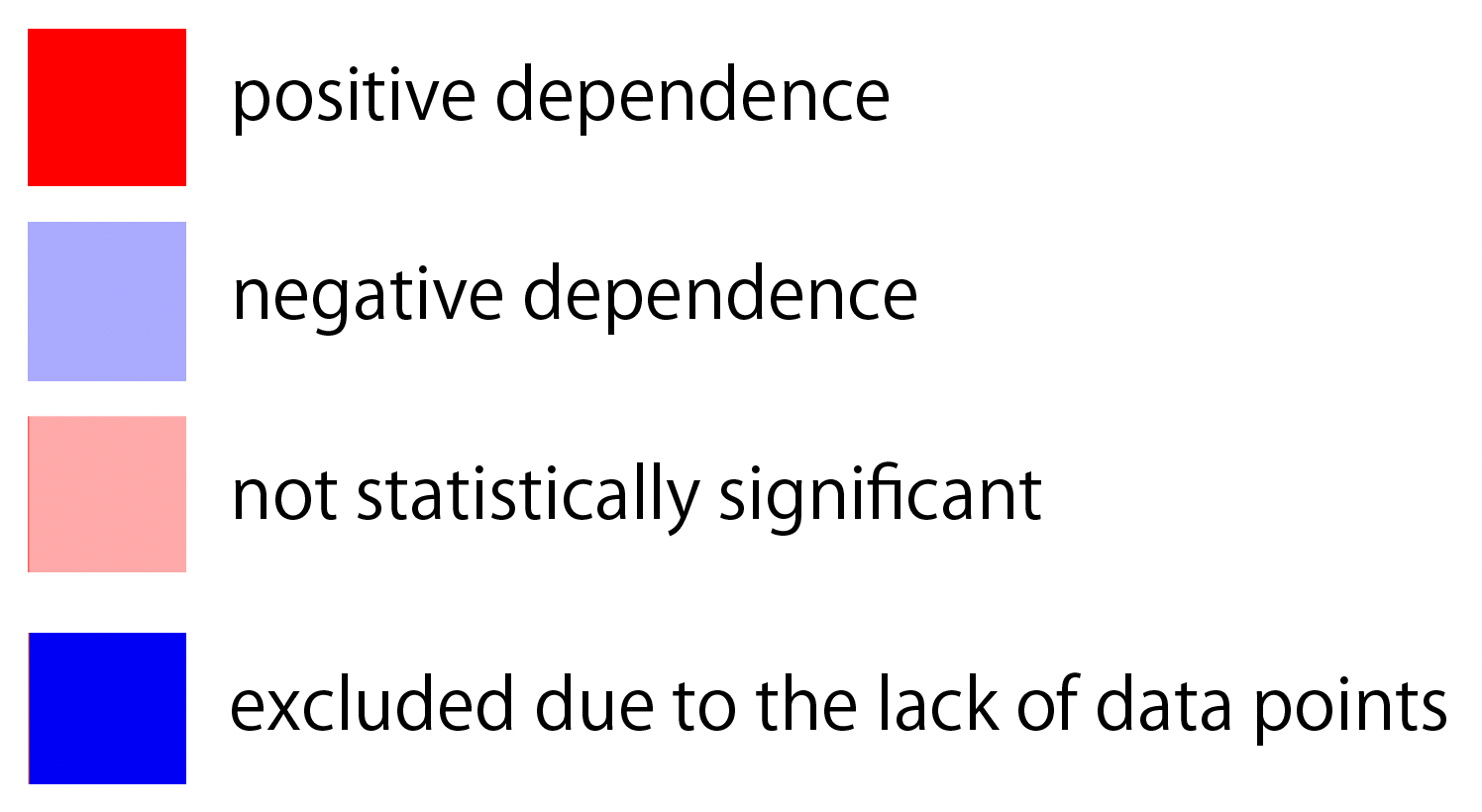}
    \vspace{1cm}
  \end{minipage}
  \caption{left : plot of observations / middle : dependence map of relative local dependence / right : dependence map of local dependence function}
  \label{fig:dependence-map-simulated}
\end{figure}

\begin{figure}[htbp]
  \begin{minipage}[b]{0.25\linewidth}
    \centering
    \includegraphics[width=\linewidth]{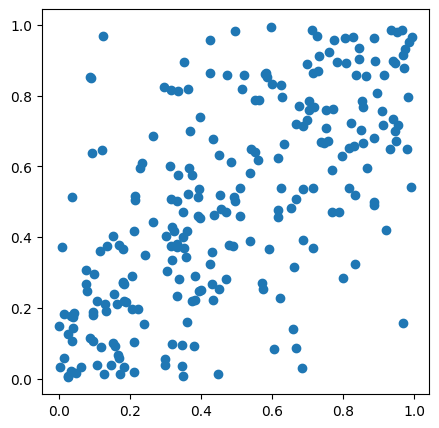}
  \end{minipage}
  \begin{minipage}[b]{0.25\linewidth}
    \centering
    \includegraphics[width=\linewidth]{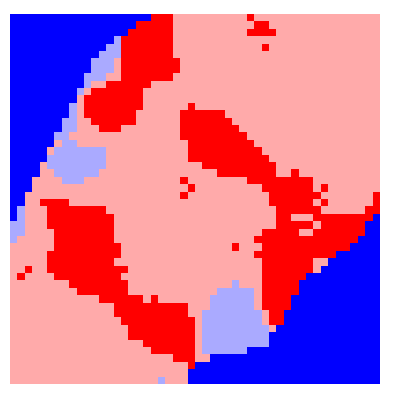}
  \end{minipage}
  \begin{minipage}[b]{0.25\linewidth}
    \centering
    \includegraphics[width=\linewidth]{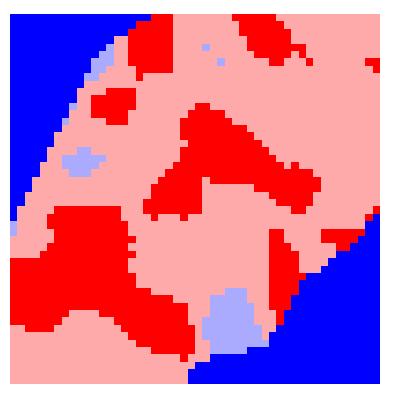}
  \end{minipage}
  \begin{minipage}[b]{0.2\linewidth}
    \centering
    \includegraphics[width=0.9\linewidth]{img/dependence_map/hanrei.png}
    \vspace{1cm}
  \end{minipage}
  \caption{left : plot of observations / middle : dependence map of relative local dependence / right : dependence map of local dependence function}
  \label{fig:dependence-map-simulated2}
\end{figure}

\subsection{Local Frank fitting}

In the kernel weighted local likelihood maximization estimation~\cite{hjort1996locally}, 
$$\frac{1}{n}\sum_{i=1}^n K_{h_1}(X_i-x)K_{h_2}(Y_i-y) \log {f(X_i,Y_i)} - \int \int K_{h_1}(X-x)K_{h_2}(Y-y)f(X,Y)\mathrm{d}X\mathrm{d}Y$$
is maximized with respect to the parameters included implicitly in $f$, where $K_h(\cdot) = \frac{1}{h}K(\frac{\cdot}{h})$.
Specifically in the local bilinear fitting~\cite{jones2003} used in Section 4.1., a bilinear form with four parameters is used to approximate $f$. By assuming the form $\log{f(x,y)} = a + bx + cy + dxy$, the estimator of the local dependence can be obtained naively as $\hat{d}$ because $ \frac{\partial^2}{\partial x \partial y}\log{f(x,y)} = d$.
Similarly, we propose the \textit{local Frank fitting} as the estimator specifically for the relative local dependence.
Here, we consider using Frank copula density with a single parameter $\theta$, instead. The local likelihood becomes
$$L_n(x,y,\theta;\{(x_i,y_i)\}) = \frac{1}{n}\sum_{i=1}^n K_{h_1}(x_i-x)K_{h_2}(y_i-y) \log{c_\theta^{Frank}(x_i,y_i)} - \int \int K_{h_1}(X-x)K_{h_2}(Y-y)c_\theta^{Frank}(X,Y)\mathrm{d}X\mathrm{d}Y$$
and the estimator of the relative local dependence at the point $(x,y)$ is calculated as $\hat{r} = 2\hat{\theta}$ where
\begin{equation} \label{eq:local-frank-fitting-estimator}
\hat{\theta} = \underset{\theta}{\mathrm{argmax}}\ L_n(x,y,\theta;\{(x_i,y_i)\})
\end{equation}
because $\frac{1}{f(x,y)}\frac{\partial^2}{\partial x \partial y}\log{f(x,y)} = \frac{1}{c_\theta^{Frank}(x,y)}\frac{\partial^2}{\partial x \partial y}\log{c_\theta^{Frank}(x,y)} = 2\theta$ when $f$ is truly the Frank density.

To evaluate the performance of this proposed estimator, we conducted the experiments using simulation data. First, we sampled data from the Frank copula with $\theta=3$ and from the Clayton copula with $\theta=5$, respectively. The number of samples in each case is set to 1000 or 10000. Then, to reduce computational time, we prepare $10 \times 10$ representative points at even intervals, $\{(x,y)\ |\ x,y = 0.15, \dots, 0.95\}$, excluding the borders. On each point, the relative local dependence is estimated by the naive estimator (based on the local bilinear fitting) and the local Frank fitting estimator.
Note that for the local Frank fitting estimator, since the maximization in \eqref{eq:local-frank-fitting-estimator} is not tractable, we employ the numerical integral method and the numerical search of $\theta$ among the candidates with the interval of 0.2, i.e.,  $\{\theta^*-4, \theta^*-3.8, \dots, \theta^*+3.8, \theta^*+4\}$, where $\theta^*$ denotes the true value. 

The estimated results are depicted in Figure~\ref{fig:rld2-frank-result} and Figure~\ref{fig:rld2-clayton-result}.
Table~\ref{tab:error} shows the performance of these estimators. The error is calculated as the squared sum of difference between the estimated value and the true value among the region where sample points are dense enough. 
Overall, it is observed that the local Frank fitting outperforms the naive estimation, and that using more samples improve the estimation performance.

\begin{figure}
    \centering
    \includegraphics[width=0.8\linewidth]{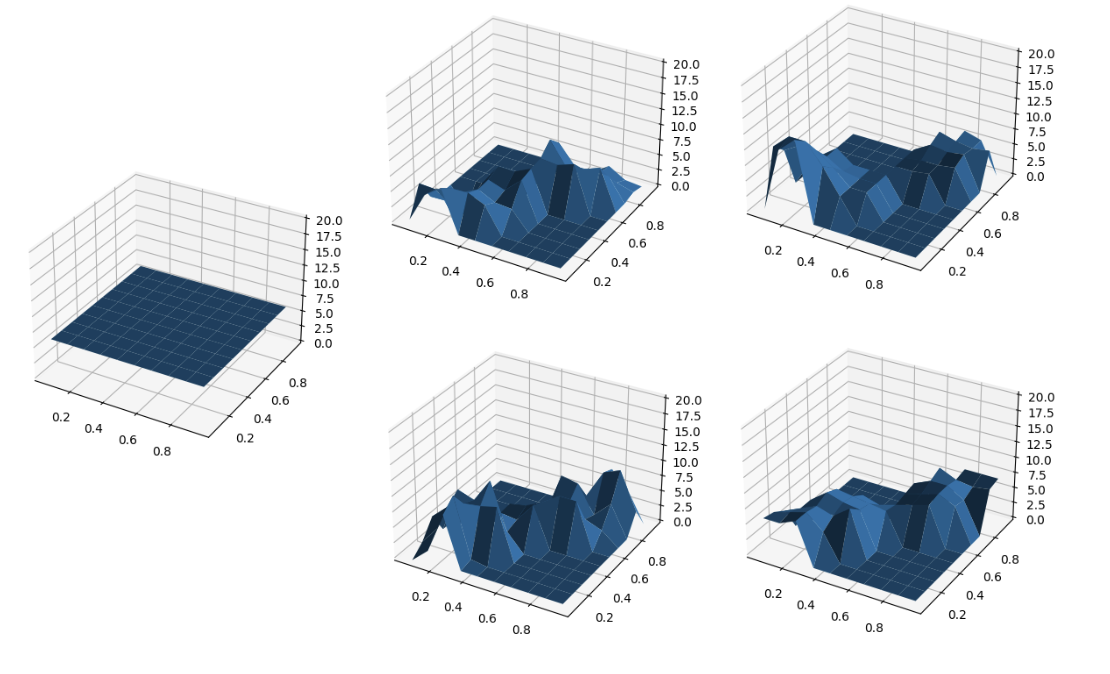}
    \caption{The estimation result of the relative local dependence of $c_{\theta=3}^{Frank}$. The left image shows the ground truth: $r(u,v) = 2\theta = 6$. The middle images show the naive estimations. The right images show the local Frank fitting. The top images are with 1000 samples, and the bottom images are with 10000 samples.}
    \label{fig:rld2-frank-result}

    \includegraphics[width=0.8\linewidth]{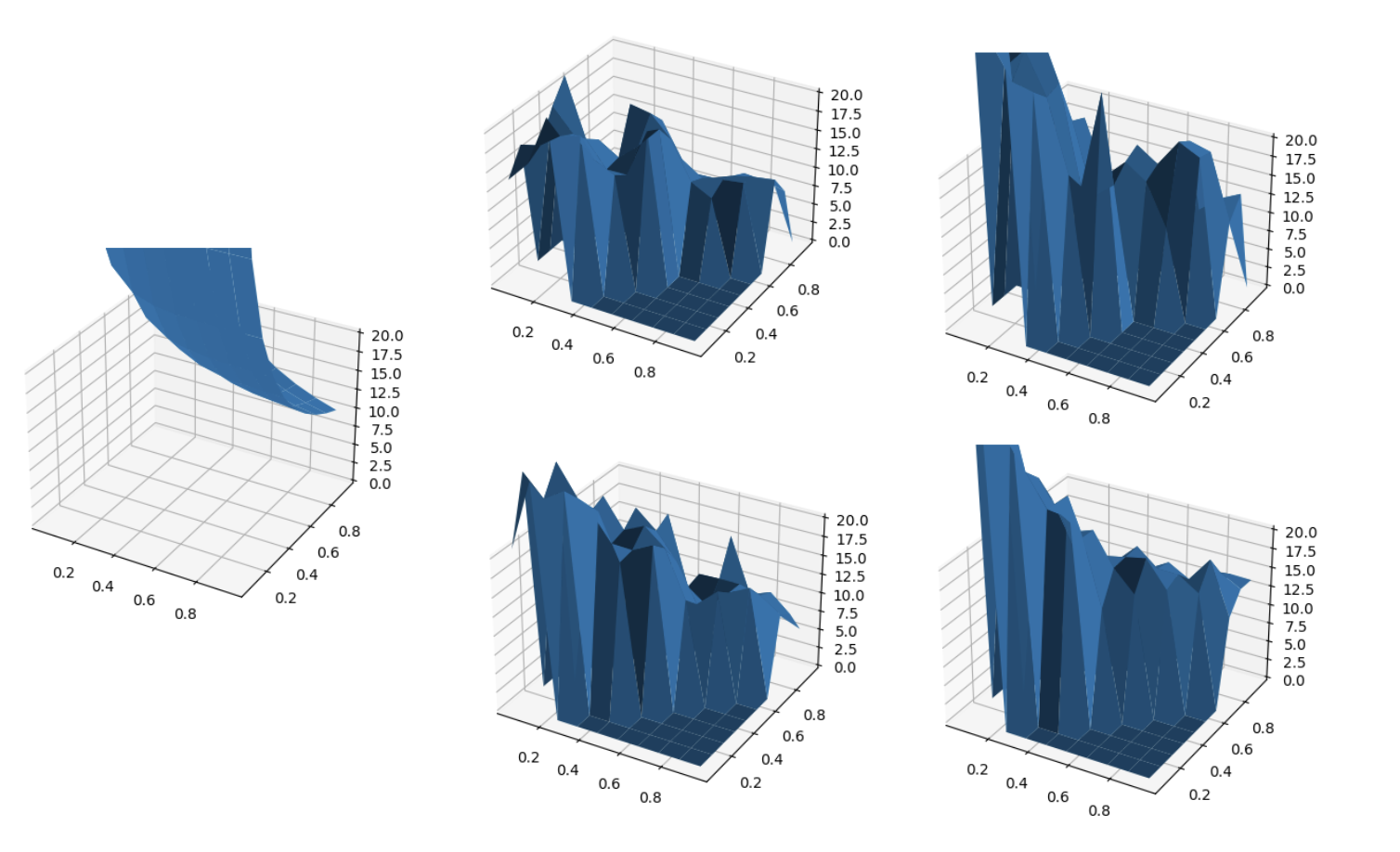}
    \caption{The estimation result of the relative local dependence of $c_{\theta=5}^{Clayton}$. The left image shows the ground truth: $r(u,v) = \frac{\theta(1+2\theta)}{\theta}C^{-1}(u,v)$. The middle images show the naive estimations. The right images show the local Frank fitting. The top images are with 1000 samples, and the bottom images are with 10000 samples.}
    \label{fig:rld2-clayton-result}
\end{figure}

\begin{table}
    \centering
    \begin{tabular}{cccc}
        True density & Estimation method & Sample size& Error \\ \hline
        Frank copula ($\theta=3$) &naive& 1000 & 22.16 \\ 
        Frank copula ($\theta=3$) &naive& 10000 & 18.91\\ 
        Frank copula ($\theta=3$) &local Frank fitting& 1000 &  \bf{14.58} \\ 
        Frank copula ($\theta=3$) &local Frank fitting& 10000 &  \bf{5.57}\\ \hline 
        Clayton copula ($\theta=5$) &naive & 1000 & 111.1 \\
        Clayton copula ($\theta=5$) &naive & 10000 & 74.60 \\
        Clayton copula ($\theta=5$) &local Frank fitting & 1000 & \bf{32.27} \\
        Clayton copula ($\theta=5$) &local Frank fitting& 10000 & \bf{19.51}\\
    \end{tabular}
    \caption{Estimation error of the estimators of relative local dependence.}
    \label{tab:error}
\end{table}


\section{New copula family defined by a partial differential equation}

In this section, we extend results of the previous section in a straightforward way so as to construct a new copula family. Inspired by the fact that Frank copula's relative local dependence is always constant, let us consider a copula $c_{k,\zeta}$ such that
\begin{equation}\label{eq:pde-k}
    \frac{1}{c(u,v)^{k}}\frac{\partial^2}{\partial u \partial v} \log{c(u,v)} = \zeta
\end{equation}
where $k$ is an integer and $\zeta\ (\neq 0)$ is a constant of proportionality. 
This family includes as typical examples uniform ($\zeta=0$), Frank ($k=1$), MICS ($k=0)$, and FGM ($k=-2$) copulas.

\subsection{Similarity between Frank and FGM copulas}

Each Frank copula and FGM copula have been among the typical choices when selecting copulas for tail-independent distributions. Their commonalities have never been mentioned, as far as the authors are aware. Here we utilize our characterization of these two copulas presented in \eqref{eq:pde-k} and discover a brand new similarity between them. First, we observe an invariant property of \eqref{eq:pde-k}.

\begin{proposition} \label{prop:k_zeta_Z}
    Assume a copula density $c_{k,\zeta}$ satisfies \eqref{eq:pde-k},
    where $k$ is a non-zero constant. 
    Then, the density function proportional to $c_{k,\zeta}(u,v)^k$ is a solution of the  PDE
    \begin{equation}\label{eq:prop_k_zeta_Z}
         \frac{\partial^2}{\partial u \partial v} \log{p(u,v)} = k\zeta Z p(u,v), Z = \int_0^1 \int_0^1 \{c_{k,\zeta}(u,v)\}^k dudv,
    \end{equation} 
    i.e., the Frank density $c_\theta^{\mathrm{Frank}}$ with the parameter $\theta = k\zeta Z/2$.
\end{proposition}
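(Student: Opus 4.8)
The plan is to verify the PDE~\eqref{eq:prop_k_zeta_Z} by a direct substitution, treating the hypothesis~\eqref{eq:pde-k} as an identity for the mixed logarithmic derivative of $c_{k,\zeta}$, and then to read off the Frank parameter from the characterisation already recorded in Example~\ref{example:frank}. First I would clear the denominator in \eqref{eq:pde-k} to obtain $\frac{\partial^2}{\partial u \partial v}\log c_{k,\zeta}(u,v) = \zeta\, c_{k,\zeta}(u,v)^{k}$, so that the nonlinearity on the right is exactly a $k$-th power. Next I set $p(u,v) = c_{k,\zeta}(u,v)^{k}/Z$ with $Z = \int_0^1\int_0^1 \{c_{k,\zeta}(u,v)\}^{k}\,du\,dv$, the normalising constant that makes $p$ integrate to one; since $p$ differs from $c_{k,\zeta}^{k}$ only by the multiplicative constant $1/Z$, it suffices to work with $\log p = k\log c_{k,\zeta} - \log Z$.

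The computation itself is short. Upon differentiating, the additive constant $\log Z$ is annihilated, so $\frac{\partial^2}{\partial u\partial v}\log p = k\,\frac{\partial^2}{\partial u\partial v}\log c_{k,\zeta}$. Substituting the cleared hypothesis gives $\frac{\partial^2}{\partial u\partial v}\log p = k\zeta\, c_{k,\zeta}^{k} = k\zeta Z\, p$, which is precisely \eqref{eq:prop_k_zeta_Z}. This establishes that $p$ solves a Liouville-type equation $\frac{\partial^2}{\partial u\partial v}\log p = \lambda p$ with proportionality constant $\lambda = k\zeta Z$. To finish with the identification, I would invoke Example~\ref{example:frank}, where the Frank density satisfies $\frac{\partial^2}{\partial u\partial v}\log c_\theta^{Frank} = i_\theta^{Frank} = 2\theta\, c_\theta^{Frank}$; comparing this with the equation just derived for $p$ forces $2\theta = k\zeta Z$, i.e.\ $\theta = k\zeta Z/2$.

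The hard part is precisely this last identification step, not the calculus. The equation $\frac{\partial^2}{\partial u\partial v}\log p = \lambda p$ carries two functional degrees of freedom in its general solution, so satisfying the PDE alone does not single out a copula: one must additionally use the normalisation and the uniform-margin structure that pin down the Frank solution with the matched $\lambda$. I would therefore phrase the conclusion as ``$p$ obeys the defining equation of the Frank density with $\theta = k\zeta Z/2$,'' and record as a caveat that $c_{k,\zeta}^{k}$ need not itself be a genuine copula density (its margins need not be uniform; this already fails for the FGM case $k=-2$). Accordingly, the identification should be read at the level of the governing PDE and functional form rather than as a literal equality of copula densities, which is exactly the sense of ``similarity'' exploited in the subsequent discussion of the Frank and FGM copulas.
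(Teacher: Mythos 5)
Your proof is correct and follows essentially the same route as the paper's: both divide out the normalising constant $Z$ from $\log p = k\log c_{k,\zeta} - \log Z$, apply \eqref{eq:pde-k}, and rescale to obtain $\frac{\partial^2}{\partial u\partial v}\log p = k\zeta Z\,p$. The caveat you raise about the identification step is well taken and is exactly how the paper handles it --- the uniqueness of the Frank solution to \eqref{eq:prop_k_zeta_Z} is delegated to the cited characterisation result \cite{sukeda2024frank}, and the conclusion is read as ``$p$ has a Frank copula'' (verified directly for the FGM case in Proposition~\ref{prop:fgm-frank}) rather than as a literal equality of copula densities.
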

\begin{proof}
    $$\frac{\partial^2}{\partial u \partial v} \log{\{c_{k,\zeta}(u,v)\}} = \zeta \{c_{k,\zeta}(u,v)\}^k$$
    $$\frac{\partial^2}{\partial u \partial v} \log{[\{c_{k,\zeta}(u,v)\}^k]^{1/k}} = \zeta \{c_{k,\zeta}(u,v)\}^k$$
    $$\frac{1}{k}\frac{\partial^2}{\partial u \partial v} \log{\frac{\{c_{k,\zeta}(u,v)\}^k}{Z}} = \zeta \{c_{k,\zeta}(u,v)\}^k$$
    $$\frac{\partial^2}{\partial u \partial v} \log{\frac{\{c_{k,\zeta}(u,v)\}^k}{Z}} = k\zeta Z \frac{\{c_{k,\zeta}(u,v)\}^k}{Z}$$
\end{proof}


Proposition~\ref{prop:k_zeta_Z} provides insight into the copula family defined by \eqref{eq:pde-k}, indicating a scaling structure centered around the Frank copula. 
Note that a copula satisfying \eqref{eq:prop_k_zeta_Z} is known to be a Frank copula~\cite{sukeda2024frank}. 
Hence, we can say that after raised to the power of $k$ and then normalized, every copula in \eqref{eq:pde-k} becomes a probability density having a Frank copula.
One notable example is the FGM copula, which appears when $k=-2$. 
By simple calculations, the FGM copula density $c_\theta^{FGM}$ satisfies
$$\frac{\partial^2}{\partial u \partial v} \log{c_\theta^{FGM}(u,v)} = 4\theta c_\theta^{FGM}(u,v)^{-2}.$$
\noindent Therefore, the FGM density raised to the power of $-2$ followed by normalization has a Frank density. 
\begin{proposition}\label{prop:fgm-frank}
    Denote the density function of the FGM copula $c_\theta^{FGM}$. Then, the density function proportional to $(c_\theta^{FGM}(u,v))^{-2}$ has a Frank copula $c_{\theta'}^{Frank}$ with parameter $\theta' = -2\log{\frac{1+\theta}{1-\theta}}$.
\end{proposition}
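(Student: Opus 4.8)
The plan is to recognize this as an essentially immediate consequence of Proposition~\ref{prop:k_zeta_Z}, with the only genuine work being the explicit evaluation of the normalizing integral $Z$. First I would record that the FGM density satisfies
$$\frac{\partial^2}{\partial u \partial v}\log c_\theta^{FGM}(u,v) = 4\theta\,\bigl(c_\theta^{FGM}(u,v)\bigr)^{-2},$$
which is precisely \eqref{eq:pde-k} with exponent $k=-2$ and proportionality constant $\zeta = 4\theta$. Proposition~\ref{prop:k_zeta_Z} then applies directly: the density proportional to $\bigl(c_\theta^{FGM}\bigr)^{-2}$ solves \eqref{eq:prop_k_zeta_Z}, and hence (by the cited characterisation~\cite{sukeda2024frank}) has a Frank copula, with parameter
$$\theta' = \frac{k\zeta Z}{2} = \frac{(-2)(4\theta)Z}{2} = -4\theta Z, \qquad Z = \int_0^1\int_0^1 \bigl(c_\theta^{FGM}(u,v)\bigr)^{-2}\,du\,dv.$$

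It therefore remains to show that $Z = \frac{1}{2\theta}\log\frac{1+\theta}{1-\theta}$, since substituting this gives $\theta' = -2\log\frac{1+\theta}{1-\theta}$ as claimed. To compute $Z$ I would integrate iteratively, exploiting the product structure $c_\theta^{FGM}(u,v) = 1 + \theta(2u-1)(2v-1)$. Fixing $v$ and substituting $w = 2u-1$ reduces the inner integral to $\tfrac12\int_{-1}^1 (1+aw)^{-2}\,dw$ with $a = \theta(2v-1)$, an elementary antiderivative that evaluates to $\bigl(1-\theta^2(2v-1)^2\bigr)^{-1}$. The outer integral in $v$, after the substitution $w = 2v-1$, becomes $\tfrac12\int_{-1}^1 (1-\theta^2 w^2)^{-1}\,dw$; the partial-fraction decomposition $\frac{1}{1-\theta^2 w^2} = \frac{1}{2}\bigl(\frac{1}{1-\theta w} + \frac{1}{1+\theta w}\bigr)$ then produces the two logarithmic terms, which add (rather than cancel) to yield $\frac{1}{\theta}\log\frac{1+\theta}{1-\theta}$, and the final factor of $\tfrac12$ delivers the stated value of $Z$.

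The main obstacle here is bookkeeping rather than anything conceptual: one must track the $\tfrac12$ Jacobian factors from each substitution and verify that the two logarithmic contributions reinforce instead of cancelling. The constraint $\theta \in [-1,1]$ (indeed $|\theta| < 1$ for a strictly positive FGM density) guarantees both convergence of $Z$ and positivity of the arguments of the logarithms, so no principal-value subtleties arise. Assembling these pieces gives $\theta' = -4\theta Z = -2\log\frac{1+\theta}{1-\theta}$, completing the proof.
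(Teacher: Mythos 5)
Your argument is correct, and it reaches the conclusion by a genuinely different route from the proof the paper actually gives for this proposition. The paper's proof (Appendix~\ref{appendix:fgm-frank}) is a self-contained brute-force verification: it computes the normalizing constant $Z$, the marginals $f_x$ and $F_x$ of the density $f \propto (c_\theta^{FGM})^{-2}$, inverts the marginal cdf, and substitutes into $c(u,v) = f(F_x^{-1}(u),F_y^{-1}(v))/\{f_x(F_x^{-1}(u))f_y(F_y^{-1}(v))\}$ to exhibit the Frank density explicitly, reading off the parameter $-4\theta Z$ at the end. You instead route everything through Proposition~\ref{prop:k_zeta_Z}: the FGM density satisfies \eqref{eq:pde-k} with $k=-2$, $\zeta=4\theta$, so the renormalized power $(c_\theta^{FGM})^{-2}/Z$ has constant relative local dependence $k\zeta Z = -8\theta Z$, and the characterization of~\cite{sukeda2024frank} identifies its copula as Frank with parameter $-4\theta Z$; the only computation left is $Z = \frac{1}{2\theta}\log\frac{1+\theta}{1-\theta}$, which you evaluate correctly (and which agrees with \eqref{eq:FGMnorm}). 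This is essentially the heuristic the paper sketches in the text surrounding the proposition before deferring to the direct calculation. What your route buys is brevity and conceptual clarity; what it costs is reliance on two external ingredients that the paper's appendix proof avoids: the uniqueness result that a \emph{copula} with constant relative local dependence must be Frank, and --- a step you should make explicit --- Proposition~\ref{prop:invariance}, which is needed to transfer the constant relative local dependence from the density $p$ (whose marginals are not uniform, so $p$ itself is not a copula and cannot literally ``be'' the Frank density) to the copula of $p$ before that uniqueness result can be applied. With that one sentence added, your proof is complete and valid.
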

\noindent In addition, since both densities of the FGM copula and Frank copula are explicitly known, this proposition can also be confirmed via direct calculations; see \ref{appendix:fgm-frank} for its proof. As far as the authors are aware, this relationship between the Frank copula and the FGM copula has not been mentioned in previous studies.

\subsection{Relationship with tail dependence}

Consider Archimedean copulas with generator function $\psi$. 
Assume generator $\psi$ is regularly varying with index $-\alpha$, which is denoted as $\psi \in RV_{-\alpha}$:
$$\lim_{x \to \infty} \frac{\psi(\lambda x)}{\psi(x)} = \lambda^{-\alpha}.$$
Larsson et al.~\cite{larsson2011extremal} showed that Archimedean copulas with generator $\psi \in RV_{-\alpha}$ has the tail dependence: $\lambda_L = \lim_{u\to0}\frac{C(u,u)}{u} = 2^{-\alpha}$.
Furthermore, Kurowicka and van Horssen~\cite{KUROWICKA2015127} pointed out that $\psi' \in RV_{-\alpha-1}$, $\psi'' \in RV_{-\alpha-2}$, $(\log{\psi''})'' \in RV_{-2}$, and that $i^c(u,u)$ is of order $u^{-2}$ as $u\to0$.
On the other hand, the relative local dependence $r^c(u,u)$ of Archimedean copulas with $\psi \in RV_{-\alpha}$ is of order $u^{-1}$ as $u\to0$.

\begin{proposition}
     If $\psi \in RV_{-\alpha}$ and the first and second derivatives of $\log{\psi''}$ are eventually monotone then relative local dependence $r^c(u,u)$ is of order $u^{-1}$ as $u\to0$.
\end{proposition}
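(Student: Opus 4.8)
The plan is to use the explicit Archimedean formula for $r^c$ restricted to the diagonal and then convert regular variation in the generator's argument $t$ into regular variation in $u$ through the substitution $t=\psi^{-1}(u)$. Setting $g(t)=(\log\psi'')''(t)/\psi''(t)$, the diagonal case $u=v$ (so that $t=2\psi^{-1}(u)$) of the relative local dependence formula for Archimedean copulas reads
$$r^c(u,u)=g\bigl(2\psi^{-1}(u)\bigr).$$
The task therefore splits into determining the regular-variation index of $g$ at infinity and then composing with the regularly varying inverse $\psi^{-1}$ at $0^+$.

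First I would fix the index of $g$. From $\psi\in RV_{-\alpha}$ the cited facts of Kurowicka and van Horssen give $\psi''\in RV_{-\alpha-2}$ and $(\log\psi'')''\in RV_{-2}$; the second is precisely where the hypothesis is used, since one begins with the slowly varying function $\log\psi''$ (note $\log\psi''(x)\sim-(\alpha+2)\log x$, hence $\log\psi''\in RV_0$) and passes to its derivatives by the monotone density theorem, whose applicability is guaranteed by the assumed eventual monotonicity of $(\log\psi'')'$ and $(\log\psi'')''$. Since functions of nonzero index are eventually of constant sign, the quotient rule for regularly varying functions then yields
$$g=\frac{(\log\psi'')''}{\psi''}\in RV_{-2-(-\alpha-2)}=RV_{\alpha}\qquad(\text{at }\infty).$$

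Next I would carry out the composition. As $\psi\in RV_{-\alpha}$ with $\alpha>0$ is decreasing to $0$, its inverse satisfies $\psi^{-1}\in RV_{-1/\alpha}$ as $u\to0^+$, with $2\psi^{-1}(u)\to\infty$ and the constant factor $2$ leaving the index unchanged. The composition theorem for regular variation — an $RV_{\alpha}$ function at infinity evaluated along an inner $RV_{-1/\alpha}$ function tending to infinity lies in $RV_{\alpha\cdot(-1/\alpha)}$ — then gives
$$r^c(u,u)=g\bigl(2\psi^{-1}(u)\bigr)\in RV_{-1}\qquad(u\to0^+),$$
which is exactly the assertion that $r^c(u,u)$ is of order $u^{-1}$. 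As a consistency check, the same bookkeeping applied to $c(u,u)=\psi''(2\psi^{-1}(u))/\psi'(\psi^{-1}(u))^2$ gives $c(u,u)\in RV_{-1}$, so together with the known $i^c(u,u)\in RV_{-2}$ the identity $r^c=i^c/c$ reproduces $RV_{-1}$.

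I expect the main obstacle to be the careful handling of the inversion and composition rather than any isolated estimate: one must ensure $\alpha>0$ so that $1/\alpha$ is meaningful and $\psi^{-1}$ is genuinely regularly varying, that $g$ is eventually nonvanishing so the quotient rule applies, and that the monotone density theorem is legitimately invoked — the role played by the eventual monotonicity hypotheses. A clean way to make the composition step rigorous is to substitute $u=1/s$ and let $s\to\infty$, reducing every statement to regular variation at infinity, where the standard inversion and composition theorems (as in Bingham--Goldie--Teugels) apply directly.
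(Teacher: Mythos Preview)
Your proposal is correct and follows essentially the same route as the paper: identify $r^c(u,u)=g(2\psi^{-1}(u))$ with $g=(\log\psi'')''/\psi''$, use the cited facts $\psi''\in RV_{-\alpha-2}$, $(\log\psi'')''\in RV_{-2}$ to get $g\in RV_{\alpha}$, and then compose with $\psi^{-1}\in RV_{-1/\alpha}$ to conclude $r^c(u,u)\in RV_{-1}$. Your write-up is more detailed than the paper's (making explicit the role of the monotone density theorem and the composition rule, plus the consistency check via $r^c=i^c/c$), but the argument is the same.
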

\begin{proof}
We follow Section 2.3.4 of Kurowicka and van Horssen~\cite{KUROWICKA2015127}. 
    Since $\psi'' \in RV_{-\alpha-2}$ and $(\log{\psi''})'' \in RV_{-2}$, $\frac{1}{\psi''}(\log{\psi''})'' \in RV_{\alpha}$. On the other hand, $\psi^{-1} \in RV_{-1/\alpha}$. Hence, 
    $$r^c(u,u) = \left.\frac{1}{\psi''(t)}(\log{\psi''(t)})'' \right|_{t = 2\psi^{-1}(u)}$$
    is of order $u^{-1}$ as $u\to0$.
\end{proof}

\begin{example}[Clayton copula]
The generator of Clayton copula $\psi^{Clayton}$ is  $RV_{-1/\theta}$ because
$$\frac{\psi^{Clayton}(\lambda t)}{\psi^{Clayton}(t)} = (\frac{1+\lambda t}{1+t})^{-1/\theta} \to \lambda^{-1/\theta} (t \to \infty).$$
The density function on the corner becomes
\begin{align*}
c^{\mathrm{Clayton}}(u,u) &= \frac{1+\theta}{u^{2+2\theta}}\frac{1}{(2u^{-\theta}-1)^{\frac{1}{\theta}+2}}\\
&\to \frac{1+\theta}{2^{\frac{1}{\theta}+2}}u^{-1} \ (u\to 0)
\end{align*}
and the relative local dependence when approaching the left bottom corner $(0,0)$ is 
\begin{align*}
    r^{\mathrm{Clayton}}(u,u) &= \frac{\theta(1+2\theta)}{1+\theta} \{C^{\mathrm{Clayton}}(u,u)\}^{-1}\\
    &= \frac{\theta(1+2\theta)}{1+\theta}(2u^{-\theta}-1)^{1/\theta}\\
    &\to \frac{\theta(1+2\theta)}{1+\theta} 2^{1/\theta}u^{-1} \ (u\to 0)
\end{align*}

\end{example}


\subsection{Checkerboard copula of $c_{k,\zeta}$}

While the theoretical relationship presented in \eqref{eq:pde-k} is simple, explicit solutions are often not feasible, which is a significant limitation especially in practical use of this copula. 
A typical approach to overcome the issue of intractable copulas is through discrete approximation, often called checkerboard copulas. Considering the checkerboard copula obtained by dividing the interval $[0, 1]$ into equally spaced intervals, it essentially becomes a contingency table problem, or a problem of matrices in other words.
Fortunately, the checkerboard approximation of these copula densities becomes tractable by Algorithm 1. 
Algorithm 1 is a variant of Sukeda and Sei~\cite{sukeda2023minimum} with a slight modification in line 4, which fixes the entry of each checkerboard accordingly to align the value of local dependence without violating the copula constraints, i.e, row sums and column sums are always $\frac{1}{n}$ when $n$ denotes the gridsize of the checkerboard. 
Note that line 4 is just solving the following quadratic equation with respect to $\delta$,
$$(1-T)\delta^2 + \{\pi_{ij} + \pi_{i+1,j+1} + (\pi_{i+1,j} + \pi_{i,j+1})T\}\delta + \pi_{ij}\pi_{i+1,j+1}-\pi_{i+1,j}\pi_{i,j+1}T = 0$$
where $T = \exp{(\zeta(\pi_{ij}+\pi_{i+1,j}+\pi_{i+1,j}+\pi_{i+1,j+1})^k)}$, 
thus easily handled as well. 

\begin{algorithm}[H]
    \caption{Greedy calculation}
    \label{alg1}
    \begin{algorithmic}[1]
    \REQUIRE constant $\zeta$, parameter $k$
    \STATE $\Pi \leftarrow n\times n$ uniform copula
    \WHILE{converge}
    \STATE Choose a $2 \times 2$ submatrix  $\begin{pmatrix}
\pi_{i,j}&\pi_{i, j+1}\\
\pi_{i+1, j}&\pi_{i+1, j+1}\\
\end{pmatrix}$ of $\Pi$.
    \STATE Solve $$\frac{1}{(\pi_{ij}+\pi_{i+1,j}+\pi_{i,j+1}+\pi_{i+1,j+1})^k}\log{\frac{(\pi_{i,j}+\delta)(\pi_{i+1,j+1}+\delta)}{(\pi_{i+1,j}-\delta)(\pi_{i,j+1}-\delta)}} = \zeta$$
    \STATE Update $\begin{pmatrix}
\pi_{i,j}&\pi_{i,j+1}\\
\pi_{i+1,j}&\pi_{i+1,j+1}\\
\end{pmatrix}\to 
\begin{pmatrix}
\pi_{i,j}+\delta&\pi_{i,j+1}-\delta\\
\pi_{i+1,j}-\delta&\pi_{i+1,j+1}+\delta\\
\end{pmatrix}$
    \ENDWHILE
    \end{algorithmic}
\end{algorithm}

Figure~\ref{fig:5x5} displays the behavior of $c_{k,\zeta}$ when varying $k$ and $\zeta$ under $n=5$, while the bottom right part is omitted because of the computational issue. It can be observed that the density in tails and diagonals grow when $\zeta$ gets larger and $k$ gets smaller. 

\begin{figure}
    \centering
    \includegraphics[width=1\linewidth]{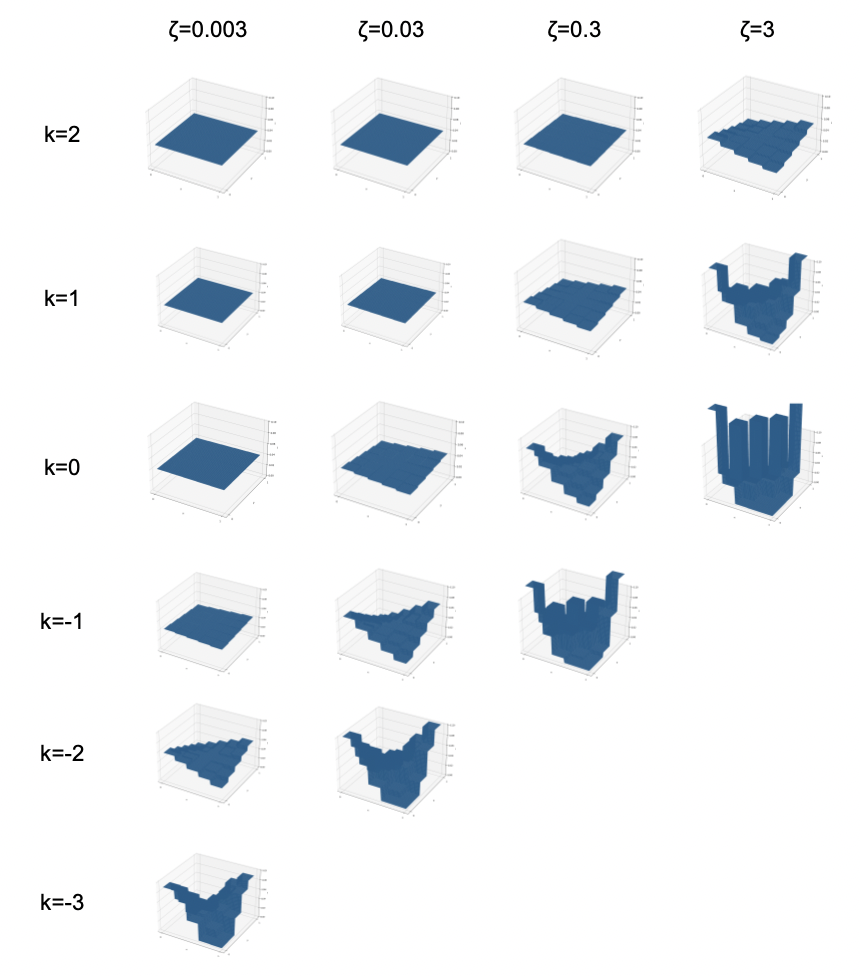}
    \caption{$c_{k,\zeta}$ when varying $k$ and $\zeta$}
    \label{fig:5x5}
\end{figure}

\section{Conclusion}

In this paper we proposed the use of relative local dependence, and studied its properties and the estimation method of it for typical copulas including the Archimedean copulas. Different from the existing local dependence function, the relative local dependence is invariant to monotone transformation in marginal distributions, which is often considered preferable as a notion of dependence. 
Since every Archimedean copula is defined by a single generator function in a simple manner, we show that a part of them can be characterised by the relationship between its copula density and its relative local dependence. Specifically, the Frank copula exhibits a constant relative local dependence everywhere on a unit square. On the other hand, the constant local dependence function leads to a different copula known as the minimum information copula.
This new characterization reveals a non-obvious relationship between the Frank copula and the FGM copula. Moreover, it can be further extended to provide new copulas defined as a solution of the partial differential equation, which states that the local dependence function is proportional to its copula density to the power of $k$. These copulas becomes tractable in practice due to the greedy numerical algorithm, while their explicit formulations are not known.

The estimation procedure of relative local dependence, conducted in Section 3.5, could be further improved. Mitigating the weakness of the kernel-based method, suffering on a strong bias on the border of the support, is an important future work.
Last but not least, we only take bivariate copulas into account in this work. All the relationship between the original density function and the (relative) local dependence presented in this paper does not necessarily hold in higher dimension, even for trivariate copulas. For the local dependence function, Kurowicka and van Horssen~\cite{KUROWICKA2015127} has investigated high-dimensional Archimedean copulas. Characterising the relative local dependence for multivariate copulas in a consistent way is also a future work.

\section*{Declaration of the use of generative AI and AI-assisted technologies}

During the preparation of this work the authors used ChatGPT-3.5 in order to correct the English text. After using this tool/service the authors reviewed and edited the content as necessary and takes full responsibility for the content of the publication.

\section*{Acknowledgments}

Issey Sukeda is supported by RIKEN Junior Research Associate Program. 
Tomonari Sei is supported by JSPS KAKENHI (19K11865, 21K11781).

\bibliographystyle{abbrv}
\bibliography{References}

\appendix

\section{Relative local dependence of a certain type of Archimedean copulas}

Here let us consider the Archimedean copula whose generator function has the form of $\psi(t) = \exp{(A(t))}$; $A(t) = -\frac{1}{\theta}\log{(1+t)}$ for Clayton copula and $A(t) = -t^{1/\theta}$ for GH copula, for instance. In this case, the derivatives of $\psi$ can be expressed via $\psi$:
$\psi'(t) = A'(t)\psi(t)$, $\psi''(t) = \{A''(t)+(A'(t))^2\}\psi(t)$, and $\log{\psi''(t)} = \log{\{A''(t)+(A'(t))^2\}} + \log{\psi(t)}$.
Hence, the relative local dependence is calculated as 
\begin{align*}
    r(u,v) &= \left. \frac{1}{\psi''(t)}\frac{\partial^2}{\partial t^2}\log{\psi''(t)}\right|_{t=\psi^{-1}(u)+\psi^{-1}(v)} \\
    &= \left. \frac{1}{\psi(t)} \left(\frac{B''(t)B(t)-B'(t)^2}{B(t)^2}+A''(t) \right) \right|_{t=\psi^{-1}(u)+\psi^{-1}(v)} \ (B(t) = A^{(2)}(t) + \{A^{(1)}(t)\}^2)\\
    &= \frac{A^{(4)}A^{(2)} + A^{(4)}\{A^{(1)}\}^2 + \{A^{(1)}\}^4 A^{(2)} + 3\{A^{(2)}\}^3 + 2\{A^{(1)}\}^3A^{(3)}-2A^{(1)}A^{(2)}A^{(3)} - \{A^{(3)}\}^2}{(A^{(2)} + \{A^{(1)}\}^2)^3} \frac{1}{C(u,v)}
\end{align*}
where $A^{(i)}$ denotes the $i$-th derivative of $A(t)$ with respect to $t$.

\section{Proof of Proposition~\ref{prop:tau_LL}} \label{appendix:local-kendall-tau}
\begin{proof}   
   Here we introduce the notations
   $$\begin{cases}
       A_1&=\int_{\{(u,\tilde{u},v,\tilde{v})\ |\ 0 \leq  \tilde{u} \leq u \leq p,\ 0 \leq \tilde{v} \leq v \leq q\}} c(u,v)c(\tilde{u},\tilde{v})dudvd\tilde{u}d\tilde{v}\\
       A_2&=\int_{\{(u,\tilde{u},v,\tilde{v})\ |\ 0 \leq u \leq \tilde{u}  \leq p,\ 0 \leq \tilde{v} \leq v \leq q\}} c(u,v)c(\tilde{u},\tilde{v})dudvd\tilde{u}d\tilde{v}\\
       A_3&=\int_{\{(u,\tilde{u},v,\tilde{v})\ |\ 0 \leq  \tilde{u} \leq u \leq p,\ 0 \leq v \leq \tilde{v} \leq q\}} c(u,v)c(\tilde{u},\tilde{v})dudvd\tilde{u}d\tilde{v}\\
       A_4&=\int_{\{(u,\tilde{u},v,\tilde{v})\ |\ 0 \leq u \leq  \tilde{u}  \leq p,\ 0 \leq v \leq \tilde{v}  \leq q\}} c(u,v)c(\tilde{u},\tilde{v})dudvd\tilde{u}d\tilde{v}\\
   \end{cases}.$$
   Using these notations,
    \begin{align*}
        C(p,q)^2 &= (\int_{0}^{p} \int_{0}^{q} c(u,v) dudv)(\int_{0}^{p}\int_{0}^{q} c(\tilde{u},\tilde{v})d\tilde{u}d\tilde{v})\\
        &= \int_{0}^{p} \int_{0}^{q}\int_{0}^{p}\int_{0}^{q} c(u,v)c(\tilde{u},\tilde{v})dudvd\tilde{u}d\tilde{v}\\
        &= \int_{\{(u,\tilde{u},v,\tilde{v})\ |\ 0 \leq  \tilde{u}, u \leq p,\ 0 \leq \tilde{v}, v \leq q\}} c(u,v)c(\tilde{u},\tilde{v})dudvd\tilde{u}d\tilde{v}\\
        &= A_1 + A_2 + A_3 + A_4.
    \end{align*}
    Due to symmetry between $(u,v)$ and $(\tilde{u},\tilde{v})$, $A_1 = A_4$ and $A_2 = A_3 = \frac{1}{2}\{C(p,q)\}^2 - A_1$. 
    Hence, 
    \begin{align*}
    \int_{0}^{p} \int_{0}^{q}\int_{0}^{p}\int_{0}^{q} \mathrm{sgn}(u-\tilde{u})\mathrm{sgn}(v-\tilde{v})c(u,v)c(\tilde{u},\tilde{v})dudvd\tilde{u}d\tilde{v}
    &= A_1 - A_2 - A_3 + A_4 \\
    &= 4A_1 - C(p,q)^2 
    \end{align*}
    Therefore, 
    $$\tau_{LL}=\frac{4\int_0^p \int_0^q C(u,v)dC}{C(p,q)^2}- 1 = \frac{4A_1-C(p,q)^2}{C(p,q)^2} = \frac{\int_{0}^{p} \int_{0}^{q}\int_{0}^{p}\int_{0}^{q} \mathrm{sgn}(u-\tilde{u})\mathrm{sgn}(v-\tilde{v})c(u,v)c(\tilde{u},\tilde{v})dudvd\tilde{u}d\tilde{v}}{C(p,q)^2}$$
\end{proof}

\section{Proof of Proposition~\ref{prop:fgm-frank}} \label{appendix:fgm-frank}
\begin{proof}
Let $f(u,v)$ denote the target density:
$$f(u,v) = \frac{(c_\theta^{FGM}(u,v))^{-2}}{Z},$$
    where $Z$ denotes the normalization factor. $Z$ can be explicitly calculated as 
    \begin{equation}
    Z = \int_0^1 \int_0^1 (c_\theta^{FGM}(u,v))^{-2} du dv = \frac{1}{2\theta}\log{\frac{1+\theta}{1-\theta}}.\label{eq:FGMnorm}
    \end{equation}
    To calculate the copula of $f(u,v)$, we need the marginal functions $f_x(x)$, $f_y(y)$, $F_x(x)$, and $F_y(y)$. These calculations are straightforward:
    $$f_x(x) = \int_0^1 dy f(x,y) = \frac{1}{Z}\frac{1}{1+\theta(2x-1)}\frac{1}{1-\theta(2x-1)} = \frac{1}{2Z}(\frac{1}{1+\theta(2x-1)}+\frac{1}{1-\theta(2x-1)}),$$
    \begin{equation}
        F_x(x) = \int_0^x f_x(x) dx = \frac{1}{4\theta Z}(\log{\frac{1+\theta}{1-\theta}}+\log{\frac{1+\theta(2x-1)}{1-\theta(2x-1)}}).\label{eq:FGMcdf}
    \end{equation}

    The copula density associated with $f(x,y)$ is calculated as 
    \begin{align*}
        c(u,v) &= \frac{f(F_x^{-1}(u),F_y^{-1}(v))}{f_x(F_x^{-1}(u))f_y(F_y^{-1}(v))}\\
        &= Z\frac{(1+\theta(2F_x^{-1}(u)-1))(1-\theta(2F_x^{-1}(u)-1))(1+\theta(2F_y^{-1}(v)-1))(1-\theta(2F_y^{-1}(v)-1))}{(1+\theta(2F_x^{-1}(u)-1)(2F_y^{-1}(v)-1))^2}\\
        &= Z(\frac{4\theta}{1-\theta})^2\frac{(1+\theta(2F_x^{-1}(u)-1))(1-\theta(2F_x^{-1}(u)-1))(1+\theta(2F_y^{-1}(v)-1))(1-\theta(2F_y^{-1}(v)-1))}{(\frac{4\theta}{1-\theta}+\frac{4}{1-\theta}\theta^2(2F_x^{-1}(u)-1)(2F_y^{-1}(v)-1))^2}\\
        &= Z(\frac{4\theta}{1-\theta})^2\frac{(1+\theta(2F_x^{-1}(u)-1))(1-\theta(2F_x^{-1}(u)-1))(1+\theta(2F_y^{-1}(v)-1))(1-\theta(2F_y^{-1}(v)-1))}{(\frac{1+\theta}{1-\theta}(2+2\theta^2(2F_x^{-1}(u)-1)(2F_y^{-1}(v)-1))-(2-2\theta^2(2F_x^{-1}(u)-1)(2F_y^{-1}(v)-1)))^2}
    \end{align*}
    We denote $H_x^+(u) = 1+\theta(2F_x^{-1}(u)-1), H_x^-(u) = 1-\theta(2F_x^{-1}(u)-1), H_y^+(v) = 1+\theta(2F_y^{-1}(v)-1), H_y^-(v) = 1-\theta(2F_y^{-1}(v)-1)$ for convenience. Then,
    \begin{align*}
        c(u,v) &= Z(\frac{4\theta}{1-\theta})^2 \frac{H_x^+(u)H_x^-(u)H_y^+(v)H_y^-(v)}{\{\frac{1+\theta}{1-\theta} (H_x^+(u)H_y^+(v) + H_x^-(u)H_y^-(v)) - (H_x^+(u)H_y^-(v) + H_x^-(u)H_y^+(v))\}^2}\\
        &= Z(\frac{4\theta}{1-\theta})^2 \frac{\frac{H_x^+(u)}{H_x^-(u)}\frac{H_y^+(v)}{H_y^-(v)}}{\{\frac{1+\theta}{1-\theta} \frac{H_x^+(u)}{H_x^-(u)}\frac{H_y^+(v)}{H_y^-(v)} + \frac{1+\theta}{1-\theta} - \frac{H_x^+(u)}{H_x^-(u)} - \frac{H_y^+(v)}{H_y^-(v)}\}^2}\\
        &= Z(\frac{4\theta}{1-\theta})^2 \frac{\frac{1+\theta}{1-\theta}\frac{H_x^+(u)}{H_x^-(u)}\frac{1+\theta}{1-\theta}\frac{H_y^+(v)}{H_y^-(v)}}{\{\frac{1+\theta}{1-\theta} \frac{H_x^+(u)}{H_x^-(u)}\frac{1+\theta}{1-\theta}\frac{H_y^+(v)}{H_y^-(v)} + (\frac{1+\theta}{1-\theta})^2 - \frac{1+\theta}{1-\theta}\frac{H_x^+(u)}{H_x^-(u)} - \frac{1+\theta}{1-\theta}\frac{H_y^+(v)}{H_y^-(v)}\}^2}\\
    \end{align*}
    Note that by applying $x = F_x^{-1}(u)$ to \eqref{eq:FGMcdf}, we obtain
    $$\exp{(4\theta Zu)} = \frac{1+\theta}{1-\theta}\frac{1+\theta(2F_x^{-1}(u)-1)}{1-\theta(2F_x^{-1}(u)-1)} = \frac{1+\theta}{1-\theta}\frac{H_x^+(u)}{H_x^-(u)}.$$
    From \eqref{eq:FGMnorm}, we also have 
    $$\exp{(4\theta Z)} = (\frac{1+\theta}{1-\theta})^2.$$
    Therefore, the copula density of interest becomes
    \begin{align*}
        c(u,v) &= Z(\frac{4\theta}{1-\theta})^2 \frac{\exp{(4\theta Zu)}\exp{(4\theta Zv)}}{\{\exp{(4\theta Zu)}\exp{(4\theta Zv)} + \exp{(4\theta Z)} - \exp{(4\theta Zu)} -\exp{(4\theta Zv)}\}^2}\\
        &= 4\theta Z(\exp{(4\theta Z)}-1) \frac{\exp{(4\theta Zu)}\exp{(4\theta Zv)}}{\{1-\exp{(4\theta Z)} - (1-\exp{(4\theta Zu)})(1-\exp{(4\theta Zu)})\}^2}\\
        &= \frac{-4\theta Z(1-\exp{(4\theta Z)}) \exp{(4\theta Z(u+v))}}{\{1-\exp{(4\theta Z)} - (1-\exp{(4\theta Zu)})(1-\exp{(4\theta Zu)})\}^2}.\\
    \end{align*}
    Here, we see that this is a Frank density. By \eqref{eq:FGMnorm}, the parameter of this Frank copula is $-4\theta Z = -2\log{\frac{1+\theta}{1-\theta}}.$
\end{proof}

\end{document}